\newtheorem{Theorem}{Theorem}[section]
\newtheorem{Lemma}[Theorem]{Lemma}
\newtheorem{Definition}[Theorem]{Definition}
\newtheorem{Proposition}[Theorem]{Proposition}
\begin{document}
%
%------------------------------- title ----------------------------------------
\title{Nonuniform Sampling for Random Signals Bandlimited in the Linear Canonical Transform Domain\thanks{This work was partially supported by the
National Natural Science Foundation of China (11525104, 11531013 and 11371200).}}

\author{Haiye Huo$^a$, Wenchang Sun$^b$\\
$^a$ Department of Mathematics, School of Science, Nanchang University,\\ Nanchang~330031, Jiangxi, China \\
\mbox{} \\
$^b$ School of Mathematical Sciences and LPMC, Nankai University,\\ Tianjin~300071, China \\
\mbox{} \\
Emails: hyhuo@ncu.edu.cn; sunwch@nankai.edu.cn}

\date{}
\maketitle

\textit{Abstract}.\,\,
In this paper, we mainly investigate the nonuniform sampling for random signals which are bandlimited in the linear canonical transform (LCT) domain.  We show that the nonuniform sampling for a random signal bandlimited in the LCT domain
is equal to the uniform sampling in the sense of second order statistic characters after a pre-filter in the LCT
domain. Moreover, we propose an approximate recovery approach for nonuniform sampling of random signals bandlimited in the LCT domain.
Furthermore, we study the mean square error of the nonuniform sampling. Finally, we do some simulations to verify the correctness of
our theoretical results.

\textit{Keywords.}
Nonuniform sampling; Linear canonical transform; Random signals; Approximate reconstruction; Sinc interpolation
%-------------------------------introduction-----------------

\section{Introduction}\label{sec I}

Sampling is very fundamental in signal processing, as it provides an effective way to connect the analogue signals and digital signals. Since Shannon \cite{Shannon1949} introduced the concept of sampling theorem in 1949, the sampling theorem has been widely studied in various academic fields.
In particular, uniform sampling theorems for deterministic signals or random signals, which are bandlimited in the Fourier domain, fractional Fourier transform domain, or linear canonical transform (LCT) domain, have been intensely studied in literatures \cite{BM2010,Brown1978,HS2015,Shannon1949,SLS2012,SLS2016,Stern2007,SZ2002,TZW2011,WRL2011,WL2014,XQ2014,XS2013,Zhang2016}.

In practice, we might only obtain nonuniform samples, for instance, in the areas of geophysics, biomedical imaging, or communication theory \cite{Eng2007,Leow2010,Senay2009}.
Therefore, nonuniform sampling has aroused much more attention on the theoretical and practical sides in the literature.
There are many kinds of approaches for recovering the original signals from their nonuniform samples.
For example, Yao and Thomas \cite{YT1967} derived the reconstruction formula for bandlimited signals from their nonuniform samples by using the Lagrange interpolation functions. However, since the Lagrange interpolation functions often have distinct formats at different sampling times, it is very complicated to recover the bandlimited signals by utilizing Lagrange interpolation functions. Many other approaches have been presented to solve this problem.
Time-warping technique was used in \cite{Papoulis1966} for recovering bandlimited signals from their jittered samples. In \cite{Selva2009}, the author made some revision on traditional Lagrange interpolation functions, in order to improve the accuracy of recovering bandlimited signals from their nonuniform samples. Due to the perfect recovery of bandlimited signals from their nonuniform samples with sinc interpolation, Maymon and Oppenheim \cite{MO2011} proposed a class of approximate recovery approaches for bandlimited signals from their nonuniform samples by utilizing sinc interpolation functions. Furthermore, Xu, Zhang and Tao \cite{XZT2016} generalized the results mentioned in \cite{MO2011} from traditional Fourier domain to fractional Fourier transform domain. In order to learn more information on nonuniform sampling, we refer the readers to \cite{AG2001,Balak1962,CGE2014,FG1992,Marvasti2012,TLWA2008,VB2000,Yen1956}.

For random signals which are bandlimited in the LCT domain, there exist few results on sampling theorems.
In \cite{HS2015}, based on the framework of LCT auto-correlation function and power spectral density,
we investigated the uniform sampling theorem and multichannel sampling theorem for random signals which are bandlimited in LCT domain.
In this paper, we derive the relationship between the LCT auto-power spectral densities of the inputs and outputs. In addition, we study the nonuniform sampling for random signals bandlimited in the LCT domain and give an approximate recovery approach with sinc interpolation functions.
Moreover, we investigate the error estimate of nonuniform sampling for random signals bandlimited in the LCT domain in the mean square sense.
Finally, some simulations are carried out to illustrate the effectiveness of our methods.

The rest of the paper is presented as follows. In Section \ref{sec P}, we first introduce the concepts of the LCT, the LCT correlation function, and the LCT power spectral density. Then, we show the connection between the LCT auto-power spectral density of the inputs and outputs. In Section \ref{sec N}, we study the nonuniform sampling, its approximate recovery method, the corresponding reconstruction error for random signals bandlimited in the LCT domain in the mean square sense. Moreover, we analyze the performances of our theoretical results by simulation. In Section \ref{sec C}, we conclude the paper.

\section{Preliminaries}\label{sec P}

\subsection{The Linear Canonical Transform}
\begin{Definition}\label{def:LCT}
The LCT of a signal $f(t)\in L^2(\mathbb{R})$ is denoted by \cite{HS2015}
\begin{equation}\label{def:LCT:L0}
\mathcal{L}_{A}\{f(t)\}(u)=
    \begin{cases}
    \int_{-\infty}^{+\infty}f(t)\sqrt{\frac{1}{j2\pi b}}e^{j\frac{a}{2b}t^{2}-j\frac{1}{b}ut+j\frac{d}{2b}u^{2}}{\rm{d}}t, & b\ne0,\\
     \sqrt{d}e^{j\frac{cd}{2}u^{2}}f(du), & b=0,\\
     \end{cases}
\end{equation}
where $A=
\left(
\begin{array}{cc}
a & b \\
c & d \\
\end{array}
\right)$,
and parameters $a,\;b,\;c,\;d\in \mathbb{R}$ satisfy $ad-bc=1$.
\end{Definition}

Since the LCT is a Chirp multiplication operator when $b=0$, we assume without loss of the generality that $b>0$ in the rest of the paper.
From (\ref{def:LCT:L0}), we can easily derive the connection between the LCT and the Fourier transform as follows:
\begin{equation}\label{eq:relation:0}
\mathcal{L}_{A}\{f(t)\}(u)=\sqrt{\frac{1}{j2\pi b}}e^{j\frac{d}{2b}u^2}\mathcal{F}(f(t)e^{j\frac{a}{2b}t^2})\big(\frac{u}{b}\big),
\end{equation}
where the Fourier transform of  $f(t)$ is defined by
\begin{equation}\label{eq:relation:1}
\mathcal{F}(f)(u)=\int_{-\infty}^{+\infty}f(t)e^{-jut}{\rm{d}}t.
\end{equation}

\subsection{The LCT Power Spectral Density}
In this paper, we consider a special class of random signals that is wide sense stationary.
Given a probability space $(\Omega, \mathscr{F}, \mathbb{P})$, a stochastic process $x(t)$ is called stationary in a wide sense, if its mean is zero, its second moment is finite, and its auto-correlation function
\begin{equation}\label{correlation}
R_{xx}(t+\tau,t)=\mathbb{E}[x(t+\tau)x^*(t)]
\end{equation}
is independent of $t\in \mathbb{R}$, where $\mathbb{E}$ denotes mathematical expectation, and $x^*$ stands for the complex conjugate of $x$.
Two stochastic processes $x(t)$ and $y(t)$ are said to be jointly stationary in a wide sense, if $x(t)$ and $y(t)$ are both wide sense stationary,
and their cross-correlation function
\begin{equation}\label{cross-corre}
R_{xy}(t+\tau,t)=\mathbb{E}[x(t+\tau)y^*(t)]
\end{equation}
is independent of $t\in \mathbb{R}$.

Next, we introduce the LCT auto-correlation function, the LCT cross-correlation function, the LCT auto-power spectral density and the LCT cross-power spectral density as follows.

\begin{Definition}\label{Def:correlation}
Given random signals $x(t)$ and $y(t)$, the LCT auto-correlation function of $x(t)$ is defined by
\begin{equation}\label{auto-correlation}
R_{xx}^{A}(t_1,t_2)=\mathbb{E}[x(t_1)x^{*}(t_2)e^{j\frac{a}{b}t_2(t_1-t_2)}]=R_{xx}(t_1,t_2)e^{j\frac{a}{b}t_2(t_1-t_2)},
\end{equation}
and the LCT cross-correlation function of $y(t)$ and $x(t)$ is defined by
\begin{equation}\label{cross-correlation}
R_{yx}^{A}(t_1,t_2)=\mathbb{E}[y(t_1)x^{*}(t_2)e^{j\frac{a}{b}t_2(t_1-t_2)}]=R_{yx}(t_1,t_2)e^{j\frac{a}{b}t_2(t_1-t_2)}.
\end{equation}
\end{Definition}

One can see that if $\tilde{x}(t)=x(t)e^{j\frac{a}{2b}t^2}$ is stationary, that is,
\begin{equation}\label{LCT:A0}
R_{\tilde{x}\tilde{x}}(t_1,t_2)=R_{\tilde{x}\tilde{x}}(\tau),
\end{equation}
where $\tau=t_1-t_2,$ then the function $R_{xx}^{A}(t_1,t_2)$ also depends only on $\tau$.
In fact,
\begin{eqnarray}\label{LCT:A1}
R_{xx}^{A}(t_1,t_2)&=&\mathbb{E}[x(t_1)x^{*}(t_2)e^{j\frac{a}{b}t_2(t_1-t_2)}]\nonumber\\
&=&\mathbb{E}[x(t_1)e^{j\frac{a}{2b}t_1^2}x^{*}(t_2)e^{-j\frac{a}{2b}t_2^2}]e^{-j\frac{a}{2b}(t_1-t_2)^2}\nonumber\\
&=&R_{\tilde{x}\tilde{x}}(t_1,t_2)e^{-j\frac{a}{2b}(t_1-t_2)^2}\nonumber\\
&=&R_{\tilde{x}\tilde{x}}(\tau)e^{-j\frac{a}{2b}\tau^2}.
\end{eqnarray}

\begin{Definition}\label{Def:LCT spectrum}
Given random signals $x(t)$ and $y(t)$, and two parameters
$A=\left(
  \begin{array}{cc}
    a & b \\
    c & d \\
  \end{array}
\right)$,
$A^{\prime}=\left(
             \begin{array}{cc}
               a & -b \\
               -c & d \\
             \end{array}
           \right).$
The LCT auto-power spectral density of $x(t)$  is defined by
\begin{eqnarray}
P_{xx}^{A}(u)
&=&\sqrt{\frac{1}{-j2\pi b}}e^{-j\frac{d}{2b}u^2}\mathcal{L}_{A}\{R_{xx}^{A}(\tau)\}(u)\nonumber\\
&=&\sqrt{\frac{1}{j2\pi b}}e^{j\frac{d}{2b}u^2}\mathcal{L}_{A^{\prime}}\{R_{xx}^{A,2}(\tau)\}(-u),\label{Def:Spec:1}
\end{eqnarray}
and the LCT cross-power spectral density of $y(t)$ and $x(t)$  is defined by
\begin{eqnarray}
P_{yx}^{A}(u)
&=&\sqrt{\frac{1}{-j2\pi b}}e^{-j\frac{d}{2b}u^2}\mathcal{L}_{A}\{R_{yx}^{A}(\tau)\}(u)\nonumber\\
&=&\sqrt{\frac{1}{j2\pi b}}e^{j\frac{d}{2b}u^2}\mathcal{L}_{A^{\prime}}\{R_{yx}^{A,2}(\tau)\}(-u),\label{Def:Spec:2}
\end{eqnarray}
where
\begin{equation}\label{Def:Spec:3}
R_{xx}^{A,2}(t_1,t_2)=\mathbb{E}[x(t_1)x^{*}(t_2)e^{j\frac{a}{b}t_1(t_1-t_2)}]=R_{xx}(t_1,t_2)e^{j\frac{a}{b}t_1(t_1-t_2)}
\end{equation}
and
\begin{equation}\label{Def:Spec:4}
R_{yx}^{A,2}(t_1,t_2)=\mathbb{E}[y(t_1)x^{*}(t_2)e^{j\frac{a}{b}t_1(t_1-t_2)}]=R_{yx}(t_1,t_2)e^{j\frac{a}{b}t_1(t_1-t_2)}.
\end{equation}
\end{Definition}

It follows from (\ref{def:LCT:L0}) and (\ref{Def:Spec:1}) that
\begin{equation}\label{spectrum:A1}
R_{xx}^A(\tau)=\int_{-\infty}^{+\infty}P_{xx}^A(u)e^{-j\frac{a}{2b}\tau^2+j\frac{1}{b}u\tau}{\rm{d}}u.
\end{equation}

In \cite{HS2015,DTW2006}, a model of LCT multiplicative filter has been introduced as in Fig.~\ref{Fig.1}, where $X(u)=\mathcal{L}_{A}\{x(t)\}(u)$, $Y(u)=X(u)H(u)$, and the output function $y(t)$ is given by
\begin{equation}\label{multi filter}
y(t)=\mathcal{L}_{A^{-1}}\{Y(u)\}(t)=\mathcal{L}_{A^{-1}}\{X(u)H(u)\}(t).
\end{equation}

With the LCT multiplicative filter described in Fig.~\ref{Fig.1}, we can obtain the relationship between the LCT auto-power spectral density  $P_{xx}^{A}(u)$ and cross-power spectral density $P_{yx}^{A}(u)$.

\begin{Proposition}\cite[Theorem 2.3]{HS2015}\label{Proposition:Spec}
Let random signals $x(t)$ and $y(t)$ be the input and output of the LCT multiplicative filter, and the transfer function $H(u)$ satisfy
\begin{equation}\label{Proposition:Spec1}
h(t) =\frac{1}{\sqrt{2\pi}}\int_{-\infty}^{+\infty}H(u)e^{jut/b}\textrm{d}u.
\end{equation}
Then,
\begin{equation}\label{Proposition:Spec2}
P_{yx}^{A}(u)=H(u)P_{xx}^{A}(u).
\end{equation}
\end{Proposition}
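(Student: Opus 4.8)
The plan is to reduce the LCT multiplicative filter to an ordinary linear time-invariant (LTI) filter acting on the chirp-modulated processes $\tilde{x}(t)=x(t)e^{j\frac{a}{2b}t^2}$ and $\tilde{y}(t)=y(t)e^{j\frac{a}{2b}t^2}$, and then transport the classical input--output spectral identity back through the chirp modulation. First I would use the factorization \eqref{eq:relation:0} to rewrite $Y(u)=X(u)H(u)$. Since $\mathcal{L}_A\{f\}(u)=\sqrt{\frac{1}{j2\pi b}}e^{j\frac{d}{2b}u^2}\mathcal{F}(f(t)e^{j\frac{a}{2b}t^2})(u/b)$, both $X=\mathcal{L}_A\{x\}$ and $Y=\mathcal{L}_A\{y\}$ carry the same prefactor $\sqrt{\frac{1}{j2\pi b}}e^{j\frac{d}{2b}u^2}$; this cancels in $Y=XH$ and leaves $\widehat{\tilde{y}}(u/b)=\widehat{\tilde{x}}(u/b)H(u)$. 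Writing $\omega=u/b$ and comparing with the definition \eqref{Proposition:Spec1} of $h$, the convolution theorem then gives $\tilde{y}=g*\tilde{x}$, where $\widehat{g}(\omega)=H(b\omega)$ (equivalently $g$ is a fixed constant multiple of $h$). This exhibits the LCT filter as an LTI filter on the tilde-processes.

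Next I would record two bookkeeping identities. Running the computation \eqref{LCT:A1} verbatim but with the mixed pair $(y,x)$ yields $R_{yx}^A(t_1,t_2)=R_{\tilde{y}\tilde{x}}(t_1,t_2)e^{-j\frac{a}{2b}\tau^2}$ with $\tau=t_1-t_2$; since $\tilde{y}$ is an LTI image of the stationary $\tilde{x}$, the pair $(\tilde{y},\tilde{x})$ is jointly wide-sense stationary, so $R_{\tilde{y}\tilde{x}}$, and hence $R_{yx}^A$, depend on $\tau$ alone. Feeding $R_{yx}^A(\tau)e^{j\frac{a}{2b}\tau^2}=R_{\tilde{y}\tilde{x}}(\tau)$ into \eqref{eq:relation:0} makes the chirp factors in the definition \eqref{Def:Spec:2} of $P_{yx}^A$ cancel against the $\sqrt{\frac{1}{-j2\pi b}}e^{-j\frac{d}{2b}u^2}$ in front (using $\sqrt{\frac{1}{-j2\pi b}}\sqrt{\frac{1}{j2\pi b}}=\frac{1}{2\pi b}$), giving the clean representation $P_{yx}^A(u)=\frac{1}{2\pi b}\widehat{R}_{\tilde{y}\tilde{x}}(u/b)$; the same manipulation applied to \eqref{Def:Spec:1} gives $P_{xx}^A(u)=\frac{1}{2\pi b}\widehat{R}_{\tilde{x}\tilde{x}}(u/b)$ (consistent with \eqref{spectrum:A1}).

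The heart of the argument is then the classical identity for an LTI filter driven by a wide-sense-stationary input: from $\tilde{y}=g*\tilde{x}$ one computes $R_{\tilde{y}\tilde{x}}(\tau)=\mathbb{E}[\tilde{y}(t+\tau)\tilde{x}^*(t)]=(g*R_{\tilde{x}\tilde{x}})(\tau)$ by pulling the expectation inside the convolution integral. Applying the Fourier convolution theorem and evaluating at $u/b$ gives $\widehat{R}_{\tilde{y}\tilde{x}}(u/b)=\widehat{g}(u/b)\widehat{R}_{\tilde{x}\tilde{x}}(u/b)=H(u)\widehat{R}_{\tilde{x}\tilde{x}}(u/b)$, using $\widehat{g}(u/b)=H(u)$. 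Substituting into the two spectral representations above produces $P_{yx}^A(u)=\frac{1}{2\pi b}H(u)\widehat{R}_{\tilde{x}\tilde{x}}(u/b)=H(u)P_{xx}^A(u)$, which is \eqref{Proposition:Spec2}.

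I expect the main obstacle to be purely in the normalization: coaxing the chirp prefactors from \eqref{eq:relation:0} and the various $\sqrt{1/(\pm j2\pi b)}$ constants to cancel exactly, and confirming that the filter definition \eqref{Proposition:Spec1} really yields $\widehat{g}(\omega)=H(b\omega)$ so that $\widehat{g}(u/b)=H(u)$ with no stray factor of $b$ surviving in the final product. A secondary point deserving a word of justification is the interchange of expectation with the convolution integral in the LTI cross-correlation step, which is legitimate under the finite-second-moment hypothesis built into the definition of wide-sense stationarity.
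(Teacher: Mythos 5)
Your argument is correct, and it takes a genuinely different route from the one the paper follows. Note first that the paper never proves this proposition itself: it is imported from \cite[Theorem 2.3]{HS2015}, and the in-paper template for how such results are proved is the companion Theorem~\ref{Thm:Spectrum}. There the authors stay entirely inside the LCT formalism: they invoke the LCT convolution theorem of \cite{DTW2006} to write the output as the chirped convolution \eqref{Thm:Spectrum:A1}, manipulate the chirp-modified correlation functions $R^{A,2}$ directly, and recognize the resulting chirp-weighted convolution as a multiplication under $\mathcal{L}_{A^{\prime}}$, reading off the spectral identity from the second forms in \eqref{Def:Spec:1}--\eqref{Def:Spec:2}. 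You instead strip the chirps once and for all: conjugating by $e^{j\frac{a}{2b}t^2}$ converts the LCT multiplicative filter into an ordinary LTI filter $g=h/(\sqrt{2\pi}\,b)$ with $\widehat{g}(\omega)=H(b\omega)$ acting on the wide-sense stationary process $\tilde{x}$, so that the claim reduces to the textbook identity $\widehat{R}_{\tilde{y}\tilde{x}}=\widehat{g}\,\widehat{R}_{\tilde{x}\tilde{x}}$, transported back through the dictionary $P^{A}(u)=\frac{1}{2\pi b}\widehat{R}(u/b)$ --- which is exactly the relation the paper itself establishes in \eqref{Lem:L2} within Lemma~\ref{Lem:L1}. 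Your normalization worries resolve cleanly: your $g=h/(\sqrt{2\pi}\,b)$ reproduces the paper's \eqref{Thm:Spectrum:A1} verbatim, and $\sqrt{1/(-j2\pi b)}\sqrt{1/(j2\pi b)}=1/(2\pi b)$ for $b>0$ is the same cancellation used in \eqref{Lem:L2}. One hypothesis you use silently but legitimately: wide-sense stationarity of $\tilde{x}$ is not stated in the proposition, yet it is implicit in the paper's framework, since $P_{xx}^{A}$ is only defined when $R_{xx}^{A}$ depends on $\tau$ alone (cf.\ \eqref{LCT:A0}--\eqref{LCT:A1}), and it is an explicit hypothesis everywhere the proposition is applied. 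On balance, your reduction makes the cancellations structural rather than computational and yields joint stationarity of $(\tilde{y},\tilde{x})$ for free, while the paper's in-domain route avoids pre-identifying the equivalent LTI kernel and meshes directly with the $A^{\prime}$-based form of the spectral definitions; both are sound, and yours is arguably the more transparent explanation of why the LCT identity is a disguised classical fact.
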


Similarly, we can get the connection between $P_{yy}^{A}(u)$ and $P_{xx}^{A}(u)$ as follows.
\begin{Theorem}\label{Thm:Spectrum}
Let random signals $x(t)$ and $y(t)$ be the input and output of the LCT multiplicative filter, and the transfer function $H(u)$ satisfy
\begin{equation}\label{Thm:Spec1}
h(t) =\frac{1}{\sqrt{2\pi}}\int_{-\infty}^{+\infty}H(u)e^{jut/b}\textrm{d}u.
\end{equation}
Then,
\begin{equation}\label{power spectrum density}
P_{yy}^{A}(u)=|H(u)|^2P_{xx}^{A}(u).
\end{equation}
\end{Theorem}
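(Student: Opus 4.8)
The plan is to reduce the statement to the classical input–output relation for the power spectral density of a linear time-invariant (LTI) filter by removing the chirp. Set $\tilde x(t) = x(t)e^{j\frac{a}{2b}t^2}$ and $\tilde y(t) = y(t)e^{j\frac{a}{2b}t^2}$. Using \eqref{eq:relation:0}, the LCTs of $x$ and $y$ carry identical chirp prefactors $\sqrt{\frac{1}{j2\pi b}}e^{j\frac{d}{2b}u^2}$, so the multiplicative relation $Y(u)=X(u)H(u)$ behind \eqref{multi filter} collapses to $\mathcal{F}(\tilde y)(u/b) = H(u)\,\mathcal{F}(\tilde x)(u/b)$. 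Writing $\omega = u/b$, this is an ordinary convolution $\tilde y = \tilde x * g$ whose transfer function is $G(\omega) = H(b\omega)$; equivalently $g$ is the inverse Fourier transform of $H(b\,\cdot)$, which matches \eqref{Thm:Spec1} up to the constant hidden in $h$. The first task is therefore to verify this reduction carefully, keeping track of the chirp factors and the normalizing constants $\sqrt{1/(j2\pi b)}$.

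Next I would connect the two LCT power spectral densities to ordinary (Wiener–Khinchin) spectra of the dechirped processes. From \eqref{LCT:A1} one has $R_{xx}^A(\tau)=R_{\tilde x\tilde x}(\tau)e^{-j\frac{a}{2b}\tau^2}$, and substituting this into \eqref{spectrum:A1} cancels the chirp and yields $R_{\tilde x\tilde x}(\tau)=\int P_{xx}^A(u)e^{j u\tau/b}\,\mathrm{d}u$. Hence, in the variable $\omega = u/b$, the ordinary power spectral density of $\tilde x$ is a constant multiple of $P_{xx}^A(b\omega)$, namely $S_{\tilde x\tilde x}(\omega)=2\pi b\,P_{xx}^A(b\omega)$, and the identical identity holds for $y$, giving $S_{\tilde y\tilde y}(\omega)=2\pi b\,P_{yy}^A(b\omega)$.

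With these two observations the conclusion is immediate: the classical LTI output-spectrum formula $S_{\tilde y\tilde y}(\omega)=|G(\omega)|^2 S_{\tilde x\tilde x}(\omega)$, obtained by computing $R_{\tilde y\tilde y}=g*\tilde g * R_{\tilde x\tilde x}$ with $\tilde g(t)=\overline{g(-t)}$ and Fourier transforming, combined with $G(\omega)=H(b\omega)$, gives $P_{yy}^A(b\omega)=|H(b\omega)|^2 P_{xx}^A(b\omega)$, i.e. \eqref{power spectrum density}. Alternatively, and more in the spirit of the word ``similarly,'' one can stay inside the LCT formalism: Proposition \ref{Proposition:Spec} already supplies $P_{yx}^A(u)=H(u)P_{xx}^A(u)$, and a parallel computation of $R_{yy}^A$ from $R_{yx}^A$ — in which the filter now acts on the \emph{conjugated} factor $y^*(t_2)$ and therefore contributes $\overline{H(u)}$ — yields the companion identity $P_{yy}^A(u)=\overline{H(u)}\,P_{yx}^A(u)$; multiplying the two relations produces $|H(u)|^2 P_{xx}^A(u)$.

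I expect the main obstacle to be bookkeeping rather than conceptual: keeping the chirp factors $e^{\pm j\frac{a}{2b}(\cdot)^2}$ and the constants $\sqrt{1/(\pm j2\pi b)}$ straight through the dechirping step and through the two appearances of the LCT in Definition \ref{Def:LCT spectrum}, and — in the conjugate-based route — correctly tracking the complex conjugation so that the transfer function enters as $\overline{H(u)}$ in the second factor while the chirps in $R_{yy}^A$ still reassemble into the single factor $e^{-j\frac{a}{2b}\tau^2}$ that Definition \ref{Def:LCT spectrum} requires. One should also record the standing assumption, implicit in \eqref{LCT:A0}–\eqref{LCT:A1}, that $\tilde x$ (and hence $\tilde y=\tilde x*g$) is wide-sense stationary, which is precisely what makes $R_{yy}^A$ depend on $\tau$ alone and legitimizes every Fourier/LCT manipulation above.
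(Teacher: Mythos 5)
Your proposal is correct, and your \emph{primary} route is genuinely different from the paper's, while your ``alternative'' route is in fact the paper's own proof. The paper never dechirps: it uses the convolution theorem of Deng--Tao--Wang to write $y$ as the chirp-convolution (\ref{Thm:Spectrum:A1}), expands the conjugated factor $y^*(t_2)$ in $R_{yy}(t_1,t_2)$, assembles the second-kind correlation $R_{yy}^{A,2}(\tau)$ as a chirp-convolution of $h^*(-\,\cdot)$ against $R_{yx}^{A,2}$, recognizes that chirp as the kernel of $\mathcal{L}_{A'}$ to obtain $\mathcal{L}_{A'}\{R_{yy}^{A,2}\}(u)=H^*(-u)\,\mathcal{L}_{A'}\{R_{yx}^{A,2}\}(u)$, translates this through the $(-u)$-form of Definition \ref{Def:LCT spectrum} into $P_{yy}^{A}(u)=H^*(u)P_{yx}^{A}(u)$, and chains with Proposition \ref{Proposition:Spec} --- precisely the conjugate-side computation you sketched, including the predicted $\overline{H(u)}$ contribution (your word ``multiplying'' should be ``substituting,'' but the chain is the same). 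Your main route instead verifies $\tilde y=\tilde x\ast g$ with $G(\omega)=H(b\omega)$ (correct: from (\ref{Thm:Spectrum:A1}) one gets $g=h/(\sqrt{2\pi}\,b)$, whose Fourier transform under the paper's convention (\ref{eq:relation:1}) is exactly $H(b\omega)$), and combines the classical relation $S_{\tilde y\tilde y}=|G|^2S_{\tilde x\tilde x}$ with the scaling identity $P_{xx}^{A}(u)=\frac{1}{2\pi b}P_{\tilde x\tilde x}(u/b)$; that identity is exactly (\ref{Lem:L2}), which the paper only derives later inside Lemma \ref{Lem:L1}, so in effect you hoist it forward and make the theorem a corollary of Wiener--Khinchin, buying freedom from all the $A'$, $R^{A,2}$, and $(-u)$ bookkeeping at the cost of one extra translation between LCT and Fourier spectral conventions. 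A further merit of your write-up is that you state explicitly the hypothesis that $\tilde x$ is wide-sense stationary: the theorem as printed omits it, yet the paper's proof also uses it silently the moment it writes $R_{yy}^{A,2}(\tau)$ as a function of $\tau$ alone; and both routes are equally formal about applying $Y(u)=X(u)H(u)$ to a random input, so neither loses rigor relative to the other.
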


\begin{proof}
By (\ref{multi filter}), we have
\begin{equation}\label{hhh}
y(t)=\mathcal{L}_{A^{-1}}\{\mathcal{L}_{A}\{x(t)\}(u)H(u)\}(t).
\end{equation}
Using the convolution theorem (\cite[Theorem 1]{DTW2006}), we can rewrite (\ref{hhh}) as
\begin{equation}\label{Thm:Spectrum:A1}
y(t)=\frac{1}{\sqrt{2\pi}b}\int_{-\infty}^{+\infty}x(t-\tau)e^{j\frac{a}{2b}(\tau^{2}-2t\tau)}h(\tau)\textrm{d}\tau.
\end{equation}
Thus,
\begin{align}
&R_{yy}(t_{1},t_{2})=\mathbb{E}[y(t_{1})y^*(t_{2})]\nonumber\\
=&\frac{1}{\sqrt{2\pi}b}\int_{-\infty}^{+\infty}h^*(u)e^{-j\frac{a}{2b}(u^2-2t_{2}u)}R_{yx}(t_{1},t_{2}-u)\textrm{d}u\nonumber\\
=&\frac{1}{\sqrt{2\pi}b}\int_{-\infty}^{+\infty}h^*(u)e^{j\frac{a}{b}t_{1}u}[R_{yx}(t_{1},t_1-(\tau+u))
   e^{-j\frac{a}{2b}(u^2+2\tau u)}]\textrm{d}u,\label{Thm:Spectrum:A3}
\end{align}
where $\tau=t_{1}-t_{2}$.
Combining (\ref{Def:Spec:4}) and (\ref{Thm:Spectrum:A3}), we get
\begin{align}
&R_{yy}^{A,2}(\tau)\nonumber\\
=&R_{yy}(t_1,t_1-\tau)e^{j\frac{a}{b}t_1\tau}\nonumber\\
=&\frac{1}{\sqrt{2\pi}b}\int_{-\infty}^{+\infty}h^*(u)e^{j\frac{a}{b}t_1(\tau+u)}R_{yx}(t_{1},t_1-(\tau+u))
   e^{-j\frac{a}{2b}(u^2+2\tau u)}\textrm{d}u\nonumber\\
=&\frac{1}{\sqrt{2\pi}b}\int_{-\infty}^{+\infty}h^*(u)R_{yx}^{A,2}(\tau+u)e^{-j\frac{a}{2b}(u^2+2\tau u)}\textrm{d}u\nonumber\\
=&\frac{1}{\sqrt{2\pi}b}\int_{-\infty}^{+\infty}h^*(-u)R_{yx}^{A,2}(\tau-u)e^{j\frac{a}{2(-b)}(u^2-2\tau u)}\textrm{d}u.\label{Thm:Spectrum:A5}
\end{align}
Therefore, we obtain
\begin{equation}\label{Thm:Spectrum:A6}
\mathcal{L}_{A^\prime}\{R_{yy}^{A,2}(\tau)\}(u)=H^*(-u)\mathcal{L}_{A^{\prime}}\{R_{yx}^{A,2}(\tau)\}(u).
\end{equation}
Substituting (\ref{Def:Spec:1}) and (\ref{Def:Spec:2}) into (\ref{Thm:Spectrum:A6}), we have
\begin{equation}\label{Thm:Spectrum:A7}
P_{yy}^{A}(u)=H^*(u)P_{yx}^{A}(u).
\end{equation}
It follows from (\ref{Proposition:Spec2}) that
\[
P_{yy}^{A}(u)=|H(u)|^2P_{xx}^{A}(u).
\]
This completes the proof.
\end{proof}

\section{Nonuniform Sampling and Error Estimate for Random Signals Bandlimited in the LCT Domain}\label{sec N}

In this section, we study the  nonuniform sampling and error estimate for random signals  which are bandlimited in the LCT domain. First, we introduce the definition.

\begin{Definition}\cite{HS2015}\label{def:bandlimited}
We call a random signal $x(t)$  bandlimited in the LCT domain, if its LCT power spectral density $P_{xx}^{A}(u)$ satisfies
\begin{equation}\label{def:bandlimited1}
P_{xx}^{A}(u)=0,\qquad |u|>u_{r},
\end{equation}
where $u_{r}$ is the bandwidth.
\end{Definition}

Before stating our main results, we present a lemma that is useful in the following.

\begin{Lemma}\label{Lem:L1}
Assume that a random signal $x(t)$ is bandlimited in the LCT domain with bandwidth $u_r$, and $\tilde{x}(t)=x(t)e^{j\frac{a}{2b}t^2}$ is a wide sense stationary process. Then, $\tilde{x}(t)$ is bandlimited in the Fourier domain with bandwidth $u_r/b.$
\end{Lemma}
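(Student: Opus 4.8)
The plan is to connect the LCT auto-power spectral density $P_{xx}^A(u)$ of $x(t)$ with the ordinary (Fourier) power spectral density of the modulated process $\tilde{x}(t)=x(t)e^{j\frac{a}{2b}t^2}$, and then read off the bandlimiting condition through this correspondence. The key observation is equation~(\ref{LCT:A1}) already established in the excerpt: since $\tilde{x}(t)$ is wide sense stationary, the LCT auto-correlation satisfies $R_{xx}^A(\tau)=R_{\tilde{x}\tilde{x}}(\tau)e^{-j\frac{a}{2b}\tau^2}$. This is the bridge between the LCT world and the Fourier world, so I would start from here.

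**First I would** substitute $R_{xx}^A(\tau)=R_{\tilde{x}\tilde{x}}(\tau)e^{-j\frac{a}{2b}\tau^2}$ into the definition~(\ref{Def:Spec:1}) of $P_{xx}^A(u)$ via the LCT $\mathcal{L}_A\{R_{xx}^A(\tau)\}(u)$. Using the relation~(\ref{eq:relation:0}) between the LCT and the Fourier transform, the chirp factor $e^{j\frac{a}{2b}\tau^2}$ coming from the LCT kernel cancels exactly against the $e^{-j\frac{a}{2b}\tau^2}$ in $R_{xx}^A$. After this cancellation and accounting for the outer normalization factors $\sqrt{1/(-j2\pi b)}\,e^{-j\frac{d}{2b}u^2}$ in~(\ref{Def:Spec:1}), what remains should be, up to a constant, the Fourier transform of $R_{\tilde{x}\tilde{x}}(\tau)$ evaluated at the scaled argument $u/b$. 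I expect to obtain a clean identity of the form $P_{xx}^A(u)=\frac{1}{2\pi b}\,S_{\tilde{x}\tilde{x}}(u/b)$, where $S_{\tilde{x}\tilde{x}}$ denotes the standard power spectral density of $\tilde{x}$, i.e.\ the Fourier transform of $R_{\tilde{x}\tilde{x}}(\tau)$.

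**From this identity the conclusion is immediate:** the hypothesis $P_{xx}^A(u)=0$ for $|u|>u_r$ forces $S_{\tilde{x}\tilde{x}}(u/b)=0$ for $|u|>u_r$, which (substituting $\omega=u/b$ and using $b>0$) says $S_{\tilde{x}\tilde{x}}(\omega)=0$ for $|\omega|>u_r/b$. By definition this means $\tilde{x}(t)$ is bandlimited in the Fourier domain with bandwidth $u_r/b$, which is exactly the claim.

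**The main obstacle** I anticipate is bookkeeping rather than conceptual: tracking the several chirp factors and normalization constants ($\sqrt{1/(-j2\pi b)}$, $e^{-j\frac{d}{2b}u^2}$, the $1/b$ scaling in the Fourier argument) through the substitution without sign or factor errors, and confirming that the $e^{j\frac{d}{2b}u^2}$ terms genuinely cancel so that $P_{xx}^A(u)$ is a nonnegative real multiple of a genuine power spectral density. I would double-check the cancellation of the $\tau^2$ chirps carefully, since that is the crux that turns the LCT spectral density into an honest Fourier spectral density; once that is verified, the scaling argument $\omega=u/b$ and the positivity of $b$ finish the proof with no further subtlety.
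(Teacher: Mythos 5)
Your proposal is correct and follows essentially the same route as the paper's own proof: both substitute the stationarity identity (\ref{LCT:A1}) into the definition (\ref{Def:Spec:1}), exploit the exact cancellation of the chirp factors $e^{\pm j\frac{a}{2b}\tau^2}$ and $e^{\pm j\frac{d}{2b}u^2}$, and arrive at the identity $P_{xx}^{A}(u)=\frac{1}{2\pi b}P_{\tilde{x}\tilde{x}}\big(\frac{u}{b}\big)$, which is precisely equation (\ref{Lem:L2}) in the paper. The conclusion via the scaling $\omega=u/b$ with $b>0$ is then immediate, exactly as in the paper.
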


\begin{proof}
Since $\tilde{x}(t)$ is stationary in a wide sense, it follows from (\ref{LCT:A1}) and (\ref{Def:Spec:1}) that
\begin{eqnarray}
P_{xx}^{A}(u)
&=&\sqrt{\frac{1}{-j2\pi b}}e^{-j\frac{d}{2b}u^2}\mathcal{L}_{A}\{R_{xx}^{A}(\tau)\}(u)\nonumber\\
&=&\sqrt{\frac{1}{-j2\pi b}}e^{-j\frac{d}{2b}u^2}\mathcal{L}_{A}\{R_{\tilde{x}\tilde{x}}(\tau)e^{-j\frac{a}{2b}\tau^2}\}(u)\nonumber\\
&=&\sqrt{\frac{1}{-j2\pi b}}\sqrt{\frac{1}{j2\pi b}}\Big[\int_{-\infty}^{+\infty}R_{\tilde{x}\tilde{x}}(\tau)
    e^{-j\frac{1}{b}u\tau}{\rm{d}}\tau\Big]\nonumber\\
&=&\frac{1}{2\pi b}P_{\tilde{x}\tilde{x}}(\frac{u}{b}).\label{Lem:L2}
\end{eqnarray}
Note that
\begin{equation}\label{Lem:L3}
P_{xx}^{A}(u)=0,\qquad |u|>u_{r}.
\end{equation}
We have
\[
P_{\tilde{x}\tilde{x}}(u)=0,\qquad |u|>\frac{u_{r}}{b}.
\]
This completes the proof.
\end{proof}

\subsection{Nonuniform Sampling}\label{subsec:N1}

First, we restate a nonuniform sampling theorem for deterministic signals which are bandlimited in the LCT domain, as mentioned in \cite{TLWA2008}.

\begin{Proposition}\cite[Theorem 4]{TLWA2008}\label{nonuniform:n1}
Suppose that a deterministic signal $f(t)$ is bandlimited in the LCT domain with bandwidth $u_r$. If
\begin{equation}\label{Prop:N1}
|t_n-n\frac{b\pi}{u_r}|\le D<\frac{b\pi}{4u_r},
\end{equation}
then the function $f(t)$ can be perfectly recovered by its samples $f(t_n)$ with the following formula,
\begin{equation}\label{Prop:N2}
f(t)=e^{-j\frac{a}{2b}t^2}\sum_{-\infty}^{+\infty}f(t_n)e^{j\frac{a}{2b}t_n^2}\frac{G(t)}{G^{\prime}(t_n)(t-t_n)},
\end{equation}
where
\begin{eqnarray*}
G(t)&=&e^{\alpha t}(t-t_0)\prod\limits_{n\ne 0}\Big(\frac{1-t}{t_n}\Big)e^{t/t_n},\\
\alpha&=&\sum\limits_{n\ne 0}\frac{1}{t_n},
\end{eqnarray*}
$D\in \mathbb{R}$, and $G^{\prime}(t)$ is  the derivative of $G(t)$.
\end{Proposition}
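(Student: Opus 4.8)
The plan is to transfer the problem from the LCT domain to the ordinary Fourier domain by the chirp substitution $\tilde f(t)=f(t)e^{j\frac{a}{2b}t^2}$, exactly as in the derivation of \eqref{eq:relation:0} and Lemma~\ref{Lem:L1}. By \eqref{eq:relation:0}, the hypothesis that $f$ is bandlimited in the LCT domain with bandwidth $u_r$, i.e.\ $\mathcal{L}_A\{f\}(u)=0$ for $|u|>u_r$, is equivalent to $\mathcal{F}(\tilde f)(\xi)=0$ for $|\xi|>u_r/b$. Hence $\tilde f$ lies in the Paley--Wiener space of functions bandlimited to $[-\sigma,\sigma]$ with $\sigma=u_r/b$, and it suffices to recover $\tilde f$ from its samples $\tilde f(t_n)=f(t_n)e^{j\frac{a}{2b}t_n^2}$ and then multiply by $e^{-j\frac{a}{2b}t^2}$.

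Next I would check that the sampling hypothesis \eqref{Prop:N1} is precisely the Kadec $\tfrac14$ condition in the Fourier picture. The Nyquist spacing for bandwidth $\sigma$ is $\pi/\sigma=b\pi/u_r$, so \eqref{Prop:N1} reads $|t_n-n\pi/\sigma|\le D<\frac{1}{4}\cdot\frac{\pi}{\sigma}$. By Kadec's $\tfrac14$ theorem, the system $\{e^{jt_n\xi}\}_{n\in\mathbb{Z}}$ is then a Riesz basis for $L^2[-\sigma,\sigma]$, equivalently $\{t_n\}$ is a complete interpolating sequence for the Paley--Wiener class; in particular the perturbed nodes stay within the stability radius guaranteed by the Paley--Wiener / Levinson perturbation theory.

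Then I would invoke the classical Yao--Thomas nonuniform reconstruction for the Paley--Wiener class: writing $G$ for the canonical (Hadamard) product with simple zeros exactly at $\{t_n\}$ and exponential type $\sigma$, one has
\begin{equation*}
\tilde f(t)=\sum_{n=-\infty}^{+\infty}\tilde f(t_n)\,\frac{G(t)}{G^{\prime}(t_n)(t-t_n)},
\end{equation*}
with convergence in $L^2$ and uniformly on compact sets. Substituting $\tilde f(t_n)=f(t_n)e^{j\frac{a}{2b}t_n^2}$ and $f(t)=\tilde f(t)e^{-j\frac{a}{2b}t^2}$ then yields exactly \eqref{Prop:N2}.

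The routine part is the chirp reduction and the bookkeeping that turns \eqref{Prop:N1} into the Kadec bound. The analytic heart of the argument is the Fourier-domain statement itself: that $G$ is a well-defined entire function of type $\sigma$ whose zeros are simple and located precisely at the $t_n$, and that the resulting Lagrange-type series converges and reproduces every element of the Paley--Wiener space. This is the content of Kadec's $\tfrac14$ theorem together with Levinson's interpolation theory for entire functions, which we take from the cited work \cite{TLWA2008}; re-deriving it from scratch would be the main obstacle, but under the stated hypotheses it applies verbatim after the change of variables.
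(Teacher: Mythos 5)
Your proposal is correct and takes essentially the same route as the source of this statement: the paper itself states the proposition without proof, quoting \cite[Theorem 4]{TLWA2008}, and that theorem is proved there exactly by your argument --- the chirp reduction $\tilde f(t)=f(t)e^{j\frac{a}{2b}t^2}$ via \eqref{eq:relation:0} to a signal Fourier-bandlimited to $\sigma=u_r/b$, the observation that \eqref{Prop:N1} is the Kadec $\tfrac14$ condition for spacing $\pi/\sigma=b\pi/u_r$, and the Levinson/Yao--Thomas Lagrange-type reconstruction with the canonical product $G$, followed by undoing the chirp. One incidental point in your favor: the factor $\bigl(\frac{1-t}{t_n}\bigr)$ in the paper's displayed $G$ is a typo for $\bigl(1-\frac{t}{t_n}\bigr)$, and your description of $G$ as the canonical product with simple zeros precisely at $\{t_n\}$ tacitly corrects it.
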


It is known that
Lagrange interpolation functions  often have distinct formats at different sampling times.
Hence it is very complicated to recover signals by using Proposition~\ref{nonuniform:n1}.
In this paper, we give another recovery approach instead.
We begin with a nonuniform sampling model \cite{XZT2016} as in Fig.~\ref{Fig.3}, where $\{x(t_n)\}$ is the sampling sequence of a random signal $x(t)$, and $\{t_n\}$ is the sequence of sampling points.
We assume that  $t_n=nT+\xi_n$, where $T\le \frac{\pi b}{u_r}$ is the average sampling interval,
and  $\{\xi_n\}$ is a sequence of independent identically distributed (i.i.d.) random variables with zero mean in the interval $(-T/2,T/2)$. This nonuniform sampling model is also called jitter sampling.
For jitter sampling, the sampling noise is introduced to the expected sampling time, i.e., $t_n=nT+\tau_n$ with $\tau_n\in(-T/2,T/2)$ and $\mathbb{E}[\tau_n]=0$. For example, the time base jitter of a $50$ GHz sampling oscilloscope is identified to have standard deviation $0.965$ ps, that is, the actual measurement time is corrupted by zero-mean Gaussian noise \cite{verbe2006}. The jittered samples often occur in biomedical devices and A/D converters due to the internal clock imperfections \cite{Amir2005,Janik2004}. %In some other cases, nonuniform sampling can also arise, e.g., the nonuniform spacing of antenna arrays in radar signal processing \cite{Legg1997} and the nonuniform spaced astronomic data interrupted by weather conditions and telescope failures \cite{Bosh1998}.

Next we show that in the sense of second order statistic characters, nonuniform sampling is identical to uniform sampling after a pre-filter.

\begin{Theorem}\label{Thm:LCT:A}
Assume that a random signal $x(t)$ is bandlimited in the LCT domain with bandwidth $u_r$, and $\tilde{x}(t)=x(t)e^{j\frac{a}{2b}t^2}$ is a wide sense stationary process. Then, in the sense of second order statistic characters, the nonuniform sampling of $x(t)$ is identical to the uniform sampling after a LCT filter $h_1(t)$ established in Fig.~\ref{Fig.4}, i.e.,
\begin{equation}\label{Thm:LCT:A1}
h_1(t)=\frac{1}{\sqrt{2\pi}}\int_{-\infty}^{+\infty}H_1(u)e^{jut/b}\textrm{d}u,
\end{equation}
where $T$ is the average sampling interval, $t_n=nT+\xi_n$ is the sampling point, $v(t)$ is an additive noise with zero mean and is independent of $x(t)$, and the LCT auto-power spectral density $P_{vv}^{A}(u)$ of $v(t)$ is $P_{xx}^{A}(u)(1-|H_1(u)|^2).$ Here, $H_1(u)=\phi_\xi(\frac{u}{b})$, where $\phi_\xi(u)$ denotes the characteristic function of $\xi_n$.
\end{Theorem}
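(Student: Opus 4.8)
The plan is to transfer everything to the modulated process $\tilde x(t)=x(t)e^{j\frac a{2b}t^2}$, which by hypothesis is wide sense stationary and, by Lemma~\ref{Lem:L1}, bandlimited in the Fourier domain with bandwidth $u_r/b$. Writing $x(t_n)=\tilde x(t_n)e^{-j\frac a{2b}t_n^2}$ turns the jittered samples of $x$ into jittered samples of the ordinary WSS process $\tilde x$, up to a chirp factor; the point of phrasing the equivalence in terms of the LCT correlation and power spectral density (Definitions~\ref{Def:correlation} and~\ref{Def:LCT spectrum}) is precisely that these objects strip off the deterministic part of that chirp, so that the comparison reduces to the classical jitter-sampling computation for $\tilde x$ in the Fourier domain.

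First I would compute the second order statistics of the nonuniform sample sequence $\{x(t_n)\}$ by averaging over the i.i.d.\ jitters $\{\xi_n\}$, separating the diagonal lag $n=m$ from the off-diagonal lags $n\ne m$. Using the spectral representation $R_{\tilde x\tilde x}(\tau)=\frac1{2\pi}\int P_{\tilde x\tilde x}(\omega)e^{j\omega\tau}\,\mathrm d\omega$ and the independence of $\xi_n$ and $\xi_m$ for $n\ne m$, the jitter average factors into $\mathbb E[e^{j\omega\xi_n}]\,\mathbb E[e^{-j\omega\xi_m}]=|\phi_\xi(\omega)|^2$. With the change of variable $\omega=u/b$ dictated by the LCT--Fourier correspondence (\ref{eq:relation:0}) and Lemma~\ref{Lem:L1}, this is exactly $|\phi_\xi(u/b)|^2=|H_1(u)|^2$; hence the off-diagonal part of the sampled spectrum is governed by $|H_1(u)|^2P_{xx}^A(u)$. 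The diagonal lag, being just $\mathbb E[|\tilde x(t_n)|^2]=R_{\tilde x\tilde x}(0)$, carries the full power $\frac1{2\pi}\int P_{\tilde x\tilde x}(\omega)\,\mathrm d\omega$ and is insensitive to the jitter law.

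Next I would compute the same statistics for the uniform sample sequence on the filter side. The filtered output $y(t)=\mathcal L_{A^{-1}}\{\mathcal L_A\{x\}H_1\}(t)$ has, by Theorem~\ref{Thm:Spectrum}, LCT auto-power spectral density $P_{yy}^A(u)=|H_1(u)|^2P_{xx}^A(u)$, so its uniform samples reproduce exactly the off-diagonal spectrum found above. The only mismatch is at lag zero: the filtered signal carries only the $|H_1|^2$-weighted power, whereas the nonuniform samples carry the full power. Adding the independent zero-mean noise $v(t)$, whose samples contribute additively (by independence) a flat, lag-zero term of total power $\frac1{2\pi}\int P_{\tilde x\tilde x}(\omega)(1-|\phi_\xi(\omega)|^2)\,\mathrm d\omega$, closes this gap precisely when $P_{vv}^A(u)=P_{xx}^A(u)\bigl(1-|H_1(u)|^2\bigr)$. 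Matching both the diagonal and off-diagonal correlations then yields the asserted filter $H_1(u)=\phi_\xi(u/b)$ and noise spectrum, establishing the second order equivalence.

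The main obstacle is the chirp factor $e^{-j\frac a{2b}t_n^2}$ evaluated at the \emph{random} times $t_n=nT+\xi_n$: expanding it produces an $n$-dependent cross term $e^{-j\frac a{2b}(2nT\xi_n+\xi_n^2)}$ that is absent on the uniform-sampling side, where the chirp sits at the nominal times $nT$. The delicate step is to verify that, once the LCT auto-power spectral density of each sample sequence is formed, these jitter-dependent chirp residues either cancel between the two sides or integrate against the symmetric, zero-mean law of $\xi_n$ so as to leave precisely the characteristic function $\phi_\xi(u/b)$ rather than a chirp-distorted transform. I expect that confirming this clean factorization, so that no spurious $n$-dependence survives and the off-diagonal factor is exactly $|H_1(u)|^2$, is where the bulk of the careful bookkeeping lies; the identification of $P_{vv}^A$ is then forced by the lag-zero power balance.
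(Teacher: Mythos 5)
Your plan follows essentially the same route as the paper: modulate to $\tilde x(t)$, invoke Lemma~\ref{Lem:L1}, get $P_{yy}^{A}(u)=|H_1(u)|^2P_{xx}^{A}(u)$ from Theorem~\ref{Thm:Spectrum} on the uniform branch, and on the nonuniform branch average the spectral representation (\ref{spectrum:A1}) over the i.i.d.\ jitter so that independence factors the expectation into $|\phi_\xi(u/b)|^2$, forcing $H_1(u)=\phi_\xi(u/b)$. Two bookkeeping differences deserve comment. First, you explicitly separate the diagonal lag $n=m$ from the off-diagonal lags and use the noise $v$ to close the lag-zero power deficit; the paper never makes this split---its computation (\ref{Thm:LCT:A7})--(\ref{Thm:LCT:A8}) tacitly assumes $k\ne 0$ (for $k=0$ one has $\xi_n-\xi_{n-k}\equiv 0$, so the factor is $1$, not $|\phi_\xi(u/b)|^2$), and on the uniform side it simply asserts $R_{zz}^{A}=R_{yy}^{A}$, never accounting for the noise autocorrelation at all. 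Your split is the more faithful rendering of the Maymon--Oppenheim argument the theorem imports. Second, however, your gap-closing step claims the samples $v(nT)$ contribute a ``flat, lag-zero'' term, and that does not follow from the stated spectrum $P_{vv}^{A}(u)=P_{xx}^{A}(u)(1-|H_1(u)|^2)$: this spectrum is not flat, so by (\ref{spectrum:A1}) the correlations $R_{vv}^{A}(kT)$ are generically nonzero for $k\ne 0$ and would spoil the off-diagonal match. What your argument actually needs there is a noise sequence that is white in discrete time with variance equal to the integrated missing power $\frac{1}{2\pi}\int P_{\tilde x\tilde x}(\omega)\bigl(1-|\phi_\xi(\omega)|^2\bigr)\mathrm{d}\omega$; with the continuous-time PSD as stated, lag-by-lag equality fails. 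This is a genuine loose end in your writeup, though in fairness the paper's own proof papers over the very same crack via the unjustified identification $R_{zz}^{A}=R_{yy}^{A}$, so neither treatment is fully rigorous at this point.

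Separately, the ``main obstacle'' you anticipate---chirp residues $e^{-j\frac{a}{2b}(2nT\xi_n+\xi_n^2)}$ at the random times---is a non-issue in the paper's formulation: the nonuniform branch (Fig.~\ref{Fig.5}) multiplies by $e^{j\frac{a}{2b}t_n^2}$ at the \emph{actual} sampling instant, so one compares the exact WSS samples $\tilde x(t_n)$ with $\tilde z(nT)$, and in passing from $R_{xx}^{A}$ to $R_{\tilde x\tilde x}$ via (\ref{LCT:A1}) the quadratic phase $e^{\pm j\frac{a}{2b}\tau^2}$ cancels identically in the random lag $\tau=kT+\xi_n-\xi_{n-k}$; see (\ref{Thm:LCT:A5})--(\ref{Thm:LCT:A6}). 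No delicate cancellation or symmetry of the jitter law is required, so the bulk of the careful bookkeeping you reserved for that step evaporates, and the real point needing care is the noise-correlation issue identified above.
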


\begin{proof}
Since $\tilde{x}(t)=x(t)e^{j\frac{a}{2b}t^2}$ is a wide sense stationary process,
the nonuniform sampling can be described as in Fig~\ref{Fig.5}. By the design of LCT filter in Fig~\ref{Fig.4}, we have
\begin{equation}\label{Thm:LCT:A2}
P_{yy}^A(u)=|H_1(u)|^2P_{xx}^A(u).
\end{equation}
Since $v(t)$ is an additive noise with zero mean
and $v(t)$  is independent of $x(t)$,
the LCT auto-correlation function $R_{zz}^A(\tau)$ of $z(t_n)$ is identical to $R_{yy}^{A}(\tau)$.
Thus, combining (\ref{spectrum:A1}) and (\ref{Thm:LCT:A2}), we have
\begin{eqnarray}
R_{zz}^A(nT,(n-k)T)
&=&R_{yy}^A(nT,(n-k)T)\nonumber\\
&=&\int_{-u_r}^{u_r}P_{yy}^A(u)e^{-j\frac{a}{2b}(kT)^2+j\frac{1}{b}ukT}{\rm{d}}u\nonumber\\
&=&\int_{-u_r}^{u_r}|H_1(u)|^2P_{xx}^A(u)e^{-j\frac{a}{2b}(kT)^2+j\frac{1}{b}ukT}{\rm{d}}u.\label{Thm:LCT:A3}
\end{eqnarray}
Hence
\begin{eqnarray}
R_{\tilde{z}\tilde{z}}(nT,(n-k)T)
&=&\mathbb{E}[z(nT)z^*(nT-kT)e^{j\frac{a}{2b}((nT)^2-(nT-kT)^2)}]\nonumber\\
&=&e^{j\frac{a}{2b}(kT)^2}R_{zz}^{A}(nT,(n-k)T)\nonumber\\
&=&\int_{-u_r}^{u_r}|H_1(u)|^2P_{xx}^A(u)e^{j\frac{1}{b}ukT}{\rm{d}}u.\label{Thm:LCT:A4}
\end{eqnarray}

Since $x(t_n)$ and $\xi_n$ are two random variables, by (\ref{spectrum:A1}), we obtain
\begin{align}
&R_{xx}^A(t_n,t_{n-k})\nonumber\\
=&\mathbb{E}[R_{xx}^A(kT+\xi_n-\xi_{n-k})]\nonumber\\
=&\int_{-u_r}^{u_r}P_{xx}^A(u)\mathbb{E}[e^{-j\frac{a}{2b}(kT+\xi_n-\xi_{n-k})^2+j\frac{1}{b}u(kT+\xi_n-\xi_{n-k})}]{\rm{d}}u.\label{Thm:LCT:A5}
\end{align}
Combining (\ref{LCT:A1}) and (\ref{spectrum:A1}), we have
\begin{align}
&\mathbb{E}[R_{\tilde{x}\tilde{x}}(kT+\xi_n-\xi_{n-k})]\nonumber\\
=&\mathbb{E}[e^{j\frac{a}{2b}(kT+\xi_n-\xi_{n-k})^2}R_{xx}^A(kT+\xi_n-\xi_{n-k})]\nonumber\\
=&\int_{-u_r}^{u_r}P_{xx}^A(u)e^{j\frac{1}{b}ukT}\mathbb{E}[e^{j\frac{1}{b}u(\xi_n-\xi_{n-k})}]{\rm{d}}u.\label{Thm:LCT:A6}
\end{align}
Let $Z=\xi_n-\xi_{n-k}$  and  $f_Z(\eta)$ be the probability density function of $Z$.
Note that $\xi_n$ and $\xi_{n-k}$ are independent and have identical distributions.
Let  $f_{\xi}(\eta)$ be their common probability density function. Then we have
\begin{equation}\label{Thm:LCT:A7}
f_Z(\eta)=f_{\xi}(\eta)\star f_{\xi}(-\eta),
\end{equation}
where $\star$ denote as the convolution operator. Hence
\begin{eqnarray}
\mathbb{E}[e^{j\frac{1}{b}u(\xi_n-\xi_{n-k})}]
&=&\int_{-\infty}^{+\infty}f_Z(\eta)e^{j\frac{1}{b}u\eta}{\rm{d}}\eta\nonumber\\
&=&\int_{-\infty}^{+\infty}[f_{\xi}(\eta)\star f_{\xi}(-\eta)]e^{j\frac{1}{b}u\eta}{\rm{d}}\eta\nonumber\\
&=&\int_{-\infty}^{+\infty}f_{\xi}(\eta)e^{j\frac{1}{b}u\eta}{\rm{d}}\eta
    \times \int_{-\infty}^{+\infty}f_{\xi}(-\eta)e^{j\frac{1}{b}u\eta}{\rm{d}}\eta \nonumber\\
&=&|\phi_{\xi}(\frac{u}{b})|^2,\label{Thm:LCT:A8}
\end{eqnarray}
where
\[
\phi_{\xi}(u)=\int_{-\infty}^{+\infty}f_{\xi}(\eta)e^{ju\eta}{\rm{d}}\eta.
\]
Substituting (\ref{Thm:LCT:A8}) into (\ref{Thm:LCT:A6}), we get
\begin{eqnarray}
\mathbb{E}[R_{\tilde{x}\tilde{x}}(kT+\xi_n-\xi_{n-k})]
&=&\int_{-u_r}^{u_r}P_{xx}^A(u)e^{j\frac{1}{b}ukT}\mathbb{E}[e^{j\frac{1}{b}u(\xi_n-\xi_{n-k})}]{\rm{d}}u\nonumber\\
&=&\int_{-u_r}^{u_r}P_{xx}^A(u)|\phi_{\xi}(\frac{u}{b})|^2e^{j\frac{1}{b}ukT}{\rm{d}}u.\label{Thm:LCT:A9}
\end{eqnarray}
By setting $H_1(u)=\phi_{\xi}(\frac{u}{b})$ in (\ref{Thm:LCT:A4}),  we get  (\ref{Thm:LCT:A9}).
Hence, the auto-correlation function of $\tilde{x}(t_n)$ in Fig.~\ref{Fig.5} is equal to that of the output in Fig.~\ref{Fig.4}. Therefore, the nonuniform sampling is identical to the uniform sampling in Fig.~\ref{Fig.5}, in the sense of second order statistic characters. This completes the proof.
\end{proof}

\subsection{Approximate recovery approach}

As we mention above, it is much easier to deal with uniform sampling than nonuniform sampling. Since sinc interpolation leads to exact recovery for uniform sampling, Maymon and Oppenheim \cite{MO2011} introduced a new approximate recovery formula of nonuniform sampling for a random signal $x(t)$ bandlimited in Fourier domain by utilizing sinc interpolation function. The approximate recovery formula can be represented as follows:
\begin{equation}\label{appxo}
\hat{x}(t)=\frac{T}{T_N}\sum_{n=-\infty}^{+\infty}x(t_n)s(t-\tilde{t}_n),
\end{equation}
where $s(t)={\rm{sinc}}\big(\frac{\pi t}{T_N}\big),$\; $\frac{\pi}{T_N}$ is the bandwidth of $x(t)$,\;$\tilde{t}_n=nT+\zeta_n$. Here, $\tilde{t}_n$ is not required to
be identical to the original sampling points $t_n$. But, if the original random signal $x(t)$ is not bandlimited in the Fourier domain, the approximate recovery approach might
not work. Motivated by \cite{MO2011}, Xu, Zhang, and Tao \cite{XZT2016} considered the case when the random signal is bandlimited in the fractional Fourier domain.
Since LCT is a more general transform, which includes Fourier transform and fractional Fourier transform as its special cases, it is possible that a signal which is non-bandlimited in the Fourier domain or the fractional Fourier domain, is bandlimited in the LCT domain. So, it is necessary to investigate the corresponding approximate recovery result for random signal bandlimited in the LCT domain.

\begin{Theorem}\label{LCT:B1}
Assume that random a signal $x(t)$ is bandlimited in the LCT domain with bandwidth $u_r$, and $\tilde{x}(t)=x(t)e^{j\frac{a}{2b}t^2}$ is a wide sense stationary process.
Then $x(t)$ can be approximated from its nonuniform samples by utilizing the sinc interpolation function,
\begin{equation}\label{LCT:B2}
\hat{x}(t)=\frac{T}{T_N}e^{-j\frac{a}{2b}t^2}\sum_{n=-\infty}^{+\infty}x(t_n)e^{j\frac{a}{2b}t_n^2}h_2(t-\tilde{t}_n),
\end{equation}
where $h_2(t)={\rm{sinc}}\big(\frac{u_rt}{b}\big),$\; $T$ is the uniform sampling interval, $T_N$ is the Nyquist sampling interval, $t_n=nT+\xi_n,$\; $\tilde{t}_n=nT+\zeta_n$, and $\xi_n$ is not necessarily equal to $\zeta_n$.
\end{Theorem}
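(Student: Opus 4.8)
The plan is to reduce the LCT statement to the Fourier-domain approximate recovery result of Maymon and Oppenheim \cite{MO2011} by passing to the chirp-modulated process $\tilde{x}(t)=x(t)e^{j\frac{a}{2b}t^2}$. The starting point is Lemma~\ref{Lem:L1}: since $x(t)$ is bandlimited in the LCT domain with bandwidth $u_r$ and $\tilde{x}(t)$ is wide sense stationary, $\tilde{x}(t)$ is bandlimited in the Fourier domain with bandwidth $u_r/b$. This converts the hypotheses of the theorem into precisely the hypotheses required in \cite{MO2011}, with the Fourier bandwidth $u_r/b$ playing the role of $\pi/T_N$; in particular the associated Nyquist interval is $T_N=\pi b/u_r$, which is consistent with the sampling condition $T\le T_N$ inherited from the model in Theorem~\ref{Thm:LCT:A}.

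First I would apply the Maymon--Oppenheim formula (\ref{appxo}) directly to $\tilde{x}(t)$. With bandwidth $u_r/b$, the interpolation kernel $s(t)={\rm{sinc}}\big(\frac{\pi t}{T_N}\big)$ becomes ${\rm{sinc}}\big(\frac{u_r t}{b}\big)=h_2(t)$, so the approximate recovery of $\tilde{x}$ from its nonuniform samples $\tilde{x}(t_n)$, reconstructed at the nodes $\tilde{t}_n=nT+\zeta_n$, reads
\[
\hat{\tilde{x}}(t)=\frac{T}{T_N}\sum_{n=-\infty}^{+\infty}\tilde{x}(t_n)\,h_2(t-\tilde{t}_n).
\]

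Next I would substitute $\tilde{x}(t_n)=x(t_n)e^{j\frac{a}{2b}t_n^2}$ and undo the chirp modulation. Since $x(t)=\tilde{x}(t)e^{-j\frac{a}{2b}t^2}$, the natural estimate of $x$ is $\hat{x}(t)=\hat{\tilde{x}}(t)e^{-j\frac{a}{2b}t^2}$, which yields exactly
\[
\hat{x}(t)=\frac{T}{T_N}e^{-j\frac{a}{2b}t^2}\sum_{n=-\infty}^{+\infty}x(t_n)e^{j\frac{a}{2b}t_n^2}h_2(t-\tilde{t}_n),
\]
that is, formula (\ref{LCT:B2}). The quality of the approximation for $x$ then transfers from that for $\tilde{x}$, because multiplication by the unimodular chirp $e^{-j\frac{a}{2b}t^2}$ preserves the pointwise mean square error: $\mathbb{E}|\hat{x}(t)-x(t)|^2=\mathbb{E}|\hat{\tilde{x}}(t)-\tilde{x}(t)|^2$.

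The step I expect to be the main obstacle is the careful verification that the hypotheses of \cite{MO2011} are genuinely satisfied by $\tilde{x}$ --- in particular that $\tilde{x}$ inherits wide sense stationarity (given) and that its Fourier bandlimitedness has the exact form assumed there --- together with matching all the constants ($T_N=\pi b/u_r$, the prefactor $T/T_N$, and the argument of the sinc) so that the kernel collapses precisely to $h_2$. Once this dictionary between the LCT parameters and the Fourier parameters is pinned down, the remainder is the routine conjugation-by-chirp bookkeeping, and the approximation guarantee follows immediately from the norm-preserving property of the chirp factor.
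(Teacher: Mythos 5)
Your proposal is correct and follows essentially the same route as the paper's own proof: invoke Lemma~\ref{Lem:L1} to get Fourier bandlimitedness of $\tilde{x}$ with bandwidth $u_r/b$, apply the Maymon--Oppenheim formula (\ref{appxo}) to $\tilde{x}$, and undo the chirp modulation to arrive at (\ref{LCT:B2}). Your explicit identification $T_N=\pi b/u_r$ (which makes the sinc kernel collapse to $h_2$) and the remark that the unimodular chirp preserves the mean square error are details the paper leaves implicit, but they are correct and add nothing structurally different.
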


\begin{proof}
From Lemma~\ref{Lem:L1}, we know that $\tilde{x}(t)$ is bandlimited in the Fourier domain with bandwidth $\frac{u_r}{b}.$
By (\ref{appxo}), we have
\begin{equation}\label{LCT:B3}
\bar{x}(t)=\sum_{n=-\infty}^{+\infty}\frac{T}{T_N}\tilde{x}(t_n){\rm{sinc}}\frac{\pi(t-\tilde{t}_n)}{T_N}.
\end{equation}
Substituting $\tilde{x}(t)=x(t)e^{j\frac{a}{2b}t^2}$ and  $\bar{x}(t)=\hat{x}(t)e^{j\frac{a}{2b}t^2}$ into (\ref{LCT:B3}), we  obtain (\ref{LCT:B2}), which completes the proof.
\end{proof}
From Theorem~\ref{LCT:B1}  we know that the approximate recovery approach for random signals  which are bandlimited in the LCT domain, can be expressed as the model presented in Fig.~\ref{Fig.6}.

\subsection{Error estimate of reconstruction for random signals}

In this subsection, we study the reconstruction error in the mean square sense by considering the sampling and reconstruction procedures as the system whose frequency response is dependent on the probability density function of the perturbations. It follows from Theorem~\ref{Thm:LCT:A} that, if the average sampling interval $T$ is greater than the Nyquist sampling interval $T_N$, then the model presented in Fig.~\ref{Fig.7} is identical to the procedure, which includes the nonuniform sampling mentioned in subsection~\ref{subsec:N1}, and the approximate reconstruction approach by using sinc interpolation function described in Fig.~\ref{Fig.6}, in the sense of second order statistic characters.

\begin{Theorem}\label{Thm:Error}
Assume that a random signal $x(t)$ is bandlimited in the LCT domain with bandwidth $u_r$, $\hat{x}(t)$ is the approximation of $x(t)$ obtained in Fig.~\ref{Fig.7},
and $\tilde{x}(t)=x(t)e^{j\frac{a}{2b}t^2}$ is a wide sense stationary process.
Let the frequency response of the filter $h_3(t)$ be the joint characteristic function of the random variables $\xi_{n}$ and $\zeta_{n}$, i.e., $\phi_{\xi\zeta}(u,-u)$.
And let $v(t)$ be an additive noise with zero mean, which is uncorrelated with $x(t)$ and has the power spectral density
\begin{equation}\label{Error:E1}
P_{vv}(u)=T\int_{-u_r}^{u_r}P_{xx}^A(u_1)[1-\phi_{\xi\zeta}(\frac{u_1}{b},-u)|^2]{\rm{d}}u_1,\quad |u|<u_r.
\end{equation}
Then the model described in Fig.~\ref{Fig.7} is identical to the procedure,
which includes the nonuniform sampling mentioned in subsection~\ref{subsec:N1} and the approximate reconstruction approach by utilizing sinc interpolation function represented in Fig.~\ref{Fig.6}, in the sense of second order statistic characters.
Furthermore, we have
\begin{eqnarray*}\label{mseerror}
\mathbb{E}[|\hat{x}(t)-x(t)|^2]
&=&\int_{-u_r}^{u_r}P_{xx}^{A}(u)|1-\phi_{\xi\zeta}(\frac{u}{b},-\frac{u}{b})|^2{\rm{d}}u{}\nonumber\\
   {}&&\qquad+\frac{T}{2\pi b}\int_{-u_r}^{u_r}P_{xx}^{A}(u)\int_{-u_r}^{u_r}1-|\phi_{\xi\zeta}(\frac{u}{b},-\frac{u_1}{b})|^2{\rm{d}}u_1{\rm{d}}u.
\end{eqnarray*}
\end{Theorem}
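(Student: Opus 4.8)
The plan is to reduce the whole problem to the Fourier-bandlimited setting through the chirp-modulated process $\tilde{x}(t)=x(t)e^{j\frac{a}{2b}t^2}$, and then follow the Maymon--Oppenheim/Xu--Zhang--Tao program carried out for $\tilde{x}$. Setting $\bar{x}(t)=\hat{x}(t)e^{j\frac{a}{2b}t^2}$, formula (\ref{LCT:B2}) becomes exactly the sinc reconstruction (\ref{LCT:B3}) of $\tilde{x}$, and since $|e^{j\frac{a}{2b}t^2}|=1$ we have the pointwise identity $|\hat{x}(t)-x(t)|=|\bar{x}(t)-\tilde{x}(t)|$, hence $\mathbb{E}[|\hat{x}(t)-x(t)|^2]=\mathbb{E}[|\bar{x}(t)-\tilde{x}(t)|^2]$. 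By Lemma~\ref{Lem:L1}, $\tilde{x}$ is wide sense stationary and Fourier-bandlimited to $u_r/b$, and combining (\ref{LCT:A1}) with (\ref{spectrum:A1}) gives the clean spectral representation $R_{\tilde{x}\tilde{x}}(\tau)=\int_{-u_r}^{u_r}P_{xx}^{A}(u)e^{ju\tau/b}\,{\rm d}u$, which will feed every correlation computation below. I would also record ${\rm sinc}(u_r\sigma/b)=\frac{1}{2u_r}\int_{-u_r}^{u_r}e^{ju_1\sigma/b}\,{\rm d}u_1$ so that the sinc kernel is handled on the same spectral footing.

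First, I would establish the equivalent model (the first assertion). I would compute the cross-correlation $\mathbb{E}[\bar{x}(t)\tilde{x}^*(s)]$ and the output auto-correlation $\mathbb{E}[\bar{x}(t)\bar{x}^*(s)]$ by taking expectation over the process $\tilde{x}$ first (producing factors $R_{\tilde{x}\tilde{x}}$) and then over the jitters. Inserting the two spectral representations above turns the jitter-expectation of $e^{ju\xi_n/b}e^{-ju_1\zeta_n/b}$ into the joint characteristic function $\phi_{\xi\zeta}(u/b,-u_1/b)$, and the sum $\sum_n e^{j(u-u_1)nT/b}$ into a Dirac comb on $u-u_1\in\frac{2\pi b}{T}\mathbb{Z}$. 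The principal ($m=0$) term sets $u_1=u$ and reproduces an LCT multiplicative filter with transfer function $H_3(u)=\phi_{\xi\zeta}(u/b,-u/b)$, matching $P_{yx}^{A}=H_3P_{xx}^{A}$ of Proposition~\ref{Proposition:Spec}; the remaining ($m\neq0$) aliasing terms, which are uncorrelated with $\tilde{x}$, aggregate into an additive zero-mean noise $v$ uncorrelated with $x$ whose power spectral density is exactly (\ref{Error:E1}). This realizes the model of Fig.~\ref{Fig.7}.

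Second, I would compute the mean square error from this model. Writing $\hat{x}=y+v$, where $y(t)=\mathcal{L}_{A^{-1}}\{H_3(u)\mathcal{L}_{A}\{x(t)\}(u)\}(t)$ is the output of the LCT filter $h_3$ and $v$ is zero-mean and uncorrelated with $x$, the error splits as $\mathbb{E}[|\hat{x}(t)-x(t)|^2]=\mathbb{E}[|y(t)-x(t)|^2]+\mathbb{E}[|v(t)|^2]$. For the first summand I would use Theorem~\ref{Thm:Spectrum} and Proposition~\ref{Proposition:Spec}, together with (\ref{spectrum:A1}) evaluated at $\tau=0$ (where every chirp factor equals $1$), to expand
\[
\mathbb{E}[|y(t)-x(t)|^2]=\int_{-u_r}^{u_r}\big(|H_3(u)|^2-H_3(u)-\overline{H_3(u)}+1\big)P_{xx}^{A}(u)\,{\rm d}u=\int_{-u_r}^{u_r}|1-\phi_{\xi\zeta}(\tfrac{u}{b},-\tfrac{u}{b})|^2\,P_{xx}^{A}(u)\,{\rm d}u,
\]
which is the first term of the claimed formula. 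For the second summand I would evaluate the variance $\mathbb{E}[|v(t)|^2]=\frac{1}{2\pi}\int_{-u_r/b}^{u_r/b}P_{vv}(\omega)\,{\rm d}\omega$, substitute (\ref{Error:E1}), and rescale the noise-frequency variable to the LCT scale (whose Jacobian contributes the factor $1/b$); this turns the bare $T$ of (\ref{Error:E1}) into the prefactor $\frac{T}{2\pi b}$ and yields the double integral, completing the formula.

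The hard part will be the equivalent-model step, and specifically the Poisson-summation bookkeeping that cleanly separates the $m=0$ contribution (which must collapse to multiplication by $H_3$, i.e.\ the deterministic distortion) from the $m\neq0$ aliasing contributions (which must reassemble into a \emph{bona fide} noise: zero mean, uncorrelated with $x$, and with nonnegative PSD (\ref{Error:E1})). Tracking the constants $T/T_N$, the comb weight $2\pi b/T$, and the $1/(2\pi)$ from each spectral representation so that they combine into the stated $\frac{T}{2\pi b}$ --- and reconciling the two notational conventions for the arguments of $\phi_{\xi\zeta}$ appearing in (\ref{Error:E1}) versus the final formula --- is where the real care is needed. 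By contrast, the chirp reduction of the first paragraph and the final variance split are routine.
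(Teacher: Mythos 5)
Your overall architecture coincides with the paper's: reduce to $\tilde{x}$ via the chirp, compute the second-order statistics of the sinc reconstruction $\bar{x}$ through spectral representations and the Dirac comb, identify the Fig.~\ref{Fig.7} model, and then obtain the MSE by integrating the error spectral density at $\tau=0$. Your final split $\mathbb{E}[|\hat{x}-x|^2]=\mathbb{E}[|y-x|^2]+\mathbb{E}[|v|^2]$ is just a rearrangement of the paper's polarization computation (\ref{Error:E15}), and your change-of-variables arithmetic converting the bare $T$ of (\ref{Error:E1}) into the prefactor $\frac{T}{2\pi b}$ is correct. Likewise, your identification of the filter $H_3(u)=\phi_{\xi\zeta}(\frac{u}{b},-\frac{u}{b})$ from the cross-correlation is sound, since that computation involves only a single sum over $n$ and no diagonal subtlety arises.

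However, there is a genuine gap in the equivalent-model step, precisely where you locate the ``hard part'': you attribute the noise PSD (\ref{Error:E1}) to the $m\neq 0$ terms of the Poisson comb. Under the standing assumption $T\le T_N=\pi b/u_r$ of subsection~\ref{subsec:N1}, both $P_{\tilde{x}\tilde{x}}$ and the sinc frequency response $H_2$ are supported in $[-u_r/b,u_r/b]$, while the comb shifts satisfy $|2\pi m/T|\ge 2u_r/b$ for $m\neq0$; hence every $m\neq0$ aliasing term is annihilated, and your plan, followed literally, yields $P_{vv}\equiv 0$, contradicting (\ref{Error:E1}). The actual source of the noise is the diagonal $n=k$ of the double sum in the auto-correlation: your per-index replacement of the jitter expectation by $\phi_{\xi\zeta}(\frac{u}{b},-\frac{u_1}{b})$ is legitimate only for $n\neq k$ (where i.i.d.\ jitters make the expectation factor as $\phi_{\xi\zeta}(u,-u_1)\phi_{\xi\zeta}^{*}(u,-u_2)$); on the diagonal $\xi_n-\xi_n=0$, the expectation degenerates to $\mathbb{E}[e^{-j(u_1-u_2)\zeta_n}]$ and does not factor. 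The paper handles this by splitting $\sum_{n,k}=\sum_{n=k}+\sum_{n\neq k}$ (the terms $I$ and $II$ in (\ref{Error:E3})), completing $II$ to a full factored double sum and subtracting the factored diagonal: the true diagonal contributes the flat power $\frac{T}{2\pi}\int_{-u_r/b}^{u_r/b}P_{\tilde{x}\tilde{x}}(u_1)\,{\rm d}u_1$, the subtracted factored diagonal contributes $-\frac{T}{2\pi}\int_{-u_r/b}^{u_r/b}P_{\tilde{x}\tilde{x}}(u_1)|\phi_{\xi\zeta}(u_1,-u)|^2\,{\rm d}u_1$, and their difference is exactly the noise PSD of (\ref{Error:E11}). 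Without this diagonal/off-diagonal bookkeeping the noise term cannot appear, so you must replace the $m=0$ versus $m\neq0$ dichotomy by the $n=k$ versus $n\neq k$ one; once that is done, the rest of your proposal goes through and reproduces the paper's proof.
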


\begin{proof}
By Theorem~\ref{LCT:B1}, we have
\begin{equation}\label{Error:E2}
\bar{x}(t)=\hat{x}(t)e^{-j\frac{a}{2b}t^2}=\frac{T}{T_N}\sum_{n=-\infty}^{+\infty}x(t_n)e^{j\frac{a}{2b}t_n^2}h_2(t-\tilde{t}_n),
\end{equation}
where $h_2(t)={\rm{sinc}}\big(\frac{u_r}{b}t\big)$.
Thus, we can get the auto-correlation function of $\bar{x}(t)$ as follows:
\begin{align}
R_{\bar{x}\bar{x}}(t,t-\tau)=&\mathbb{E}[\bar{x}(t)\bar{x}(t-\tau)]\nonumber\\
=&\mathbb{E}\Big[\frac{T}{T_N}\sum_{n=-\infty}^{+\infty}x(t_n)e^{j\frac{a}{2b}t_n^2}h_2(t-\tilde{t}_n)\nonumber\\
    &\quad\times\frac{T}{T_N}\sum_{k=-\infty}^{+\infty}x^*(t_k)e^{-j\frac{a}{2b}t_k^2}h_2^*(t-\tau-\tilde{t}_k)\Big]\nonumber\\
=&\Big(\frac{T}{T_N}\Big)^2\mathbb{E}\Big[\sum_{n=-\infty}^{+\infty}x(nT+\xi_n)e^{j\frac{a}{2b}(nT+\xi_n)^2}h_2(t-nT-\zeta_n){}\nonumber\\
   &\quad\times \sum_{k=-\infty}^{+\infty}x^*(kT+\xi_k)e^{-j\frac{a}{2b}(kT+\xi_k)^2}h_2^*(t-\tau-kT-\zeta_{k})\Big]\nonumber\\
=&\Big(\frac{T}{T_N}\Big)^2\sum_{n=-\infty}^{+\infty}\sum_{k=-\infty}^{+\infty}
        \mathbb{E}\Big[R_{\tilde{x}\tilde{x}}(nT-kT+\xi_n-\xi_k){}\nonumber\\
   &\quad\times h_2(t-nT-\zeta_n)h_2^*(t-\tau-kT-\zeta_{k})\Big]\nonumber\\
=&\Big(\frac{T}{T_N}\Big)^2R_{\tilde{x}\tilde{x}}(0)\sum_{n=-\infty}^{+\infty}\mathbb{E}\Big[h_2(t-nT-\zeta_n)h_2^*(t-\tau-nT-\zeta_{n})\Big]\nonumber\\
   &\quad+\Big(\frac{T}{T_N}\Big)^2\sum_{n\ne k}\mathbb{E}\Big[R_{\tilde{x}\tilde{x}}(nT-kT+\xi_n-\xi_k)h_2(t-nT-\zeta_n){}\nonumber\\
   &\qquad\times h_2^*(t-\tau-kT-\zeta_{k})\Big]\nonumber\\
\triangleq &I+II.\label{Error:E3}
\end{align}
Next, we use two steps to compute $R_{\bar{x}\bar{x}}(t,t-\tau)$.
Note that
\begin{equation}\label{Error:E5}
\sum_{n}e^{j(u_2-u_1)nT}=2\pi\sum_{k}\delta((u_2-u_1)T-2\pi k)
\end{equation}
and
\begin{eqnarray}
h_2(t)&=&\frac{1}{\sqrt{2\pi}}\int_{-\infty}^{+\infty}H_2(u)e^{j\frac{u}{b}t}\textrm{d}u\nonumber\\
&=&\frac{b}{\sqrt{2\pi}}\int_{-\infty}^{+\infty}H_2(ub)e^{jut}\textrm{d}u.\label{Error:E6}
\end{eqnarray}
We obtain
\begin{align}
I=&\Big(\frac{T}{T_N}\Big)^2R_{\tilde{x}\tilde{x}}(0)\sum_{n=-\infty}^{+\infty}\mathbb{E}\Big[h_2(t-nT-\zeta_n)h_2^*(t-\tau-nT-\zeta_{n})\Big]\nonumber\\
=&\frac{b^2}{2\pi}\Big(\frac{T}{T_N}\Big)^2R_{\tilde{x}\tilde{x}}(0)\int_{-\infty}^{+\infty}\int_{-\infty}^{+\infty}H_2(bu_1)H_2^*(bu_2)
   e^{j(u_1-u_2)t}e^{ju_2\tau}{}\nonumber\\
   &\quad\times\sum_{n=-\infty}^{+\infty}e^{j(u_2-u_1)nT}\mathbb{E}[e^{j(u_2-u_1)\zeta_n}]{\rm{d}}u_1{\rm{d}}u_2\nonumber\\
=&\Big(\frac{Tb}{T_N}\Big)^2R_{\tilde{x}\tilde{x}}(0)\int_{-\frac{u_r}{b}}^{\frac{u_r}{b}}\frac{1}{T}|H_2(bu)|^2e^{ju\tau}{\rm{d}}u\nonumber\\
=&\frac{T}{4\pi^2}\int_{-\frac{u_r}{b}}^{\frac{u_r}{b}}e^{ju\tau}\Big[\int_{-\frac{u_r}{b}}^{\frac{u_r}{b}}
    P_{\tilde{x}\tilde{x}}(u_1){\rm{d}}u_1\Big]{\rm{d}}u,\label{Error:E7}
\end{align}
where we use the fact that $H_2(bu)=\frac{T_N}{b\sqrt{2\pi}}\chi_{[-\frac{\pi}{T_N},\frac{\pi}{T_N}]}(u)$ in the last step.

Similarly, we have
\begin{align}
II=&\Big(\frac{T}{T_N}\Big)^2\sum_{n\ne k}\mathbb{E}\Big[R_{\tilde{x}\tilde{x}}(nT-kT+\xi_n-\xi_k)h_2(t-nT-\zeta_n){}\nonumber\\
   &\times h_2^*(t-\tau-kT-\zeta_{k})\Big] \nonumber\\
=&\frac{b^2}{(2\pi)^2}\Big(\frac{T}{T_N}\Big)^2\sum_{n\ne k}\mathbb{E}\Big[\int_{-\frac{u_r}{b}}^{\frac{u_r}{b}}
       P_{\tilde{x}\tilde{x}}(u)e^{ju(nT-kT+\xi_n-\xi_k)}{\rm{d}}u\nonumber\\
   &\times\int_{-\infty}^{+\infty}H_2(bu_1)e^{ju_1(t-nT-\zeta_n)}{\rm{d}}u_1\int_{-\infty}^{+\infty}H_2^*(bu_2)
       e^{-ju_2(t-\tau-kT-\zeta_{k})}{\rm{d}}u_2\Big]\nonumber\\
=&\frac{b^2}{(2\pi)^2}\Big(\frac{T}{T_N}\Big)^2\sum_{n\ne k}\int_{-\frac{u_r}{b}}^{\frac{u_r}{b}}
       \int_{-\infty}^{+\infty}\int_{-\infty}^{+\infty}P_{\tilde{x}\tilde{x}}(u)H_2(bu_1)H_2^*(bu_2) e^{ju_2\tau}\nonumber\\
   &\times e^{j(u_1-u_2)t}e^{j(u-u_1)nT}e^{-j(u-u_2)kT}\mathbb{E}\Big[e^{ju\xi_n}e^{-ju\xi_k}\nonumber\\
      &\quad\times e^{-ju_1\zeta_n}e^{ju_2\zeta_k}\Big]{\rm{d}}u_1{\rm{d}}u_2{\rm{d}}u\nonumber\\
=&\frac{b^2}{(2\pi)^2}\Big(\frac{T}{T_N}\Big)^2\int_{-\frac{u_r}{b}}^{\frac{u_r}{b}}\int_{-\infty}^{+\infty}\int_{-\infty}^{+\infty}
       P_{\tilde{x}\tilde{x}}(u)H_2(bu_1)H_2^*(bu_2)\phi_{\xi\zeta}(u,-u_1)\nonumber\\
     &\times\phi_{\xi\zeta}^*(u,-u_2)e^{ju_2\tau}e^{j(u_1-u_2)t}\sum_{n}e^{j(u-u_1)nT}\nonumber\\
       &\quad\times\sum_{k}e^{-j(u-u_2)kT}{\rm{d}}u_1{\rm{d}}u_2{\rm{d}}u-\frac{b^2}{(2\pi)^2}\Big(\frac{T}{T_N}\Big)^2
        \int_{-\frac{u_r}{b}}^{\frac{u_r}{b}}\int_{-\infty}^{+\infty}\int_{-\infty}^{+\infty}\nonumber\\
         &\qquad\times P_{\tilde{x}\tilde{x}}(u)H_2(bu_1)H_2^*(bu_2)\phi_{\xi\zeta}(u,-u_1)\phi_{\xi\zeta}^*(u,-u_2)
               e^{ju_2\tau}e^{j(u_1-u_2)t}\nonumber\\
            &\quad\qquad\times\sum_{n}e^{j(u_2-u_1)nT}{\rm{d}}u_1{\rm{d}}u_2{\rm{d}}u\nonumber\\
=&\Big(\frac{b}{T_N}\Big)^2\int_{-\frac{u_r}{b}}^{\frac{u_r}{b}}
       P_{\tilde{x}\tilde{x}}(u)|H_2(bu)|^2|\phi_{\xi\zeta}(u,-u)|^2e^{ju\tau}{\rm{d}}u\nonumber\\
   &-\frac{T}{2\pi}\Big(\frac{b}{T_N}\Big)^2\int_{-\frac{u_r}{b}}^{\frac{u_r}{b}}\int_{-\frac{u_r}{b}}^{\frac{u_r}{b}}
       P_{\tilde{x}\tilde{x}}(u_1)|H_2(bu)|^2|\phi_{\xi\zeta}(u_1,-u)|^2e^{ju\tau}{\rm{d}}u_1{\rm{d}}u\nonumber\\
=&\Big(\frac{b}{T_N}\Big)^2\int_{-\frac{u_r}{b}}^{\frac{u_r}{b}}|H_2(bu)|^2e^{ju\tau}
       \Big[P_{\tilde{x}\tilde{x}}(u)|\phi_{\xi\zeta}(u,-u)|^2-\frac{T}{2\pi}\int_{-\frac{u_r}{b}}^{\frac{u_r}{b}}P_{\tilde{x}\tilde{x}}(u_1)\nonumber\\
    &\times|\phi_{\xi\zeta}(u_1,-u)|^2{\rm{d}}u_1\Big]{\rm{d}}u\nonumber\\
=&\frac{1}{2\pi}\int_{-\frac{u_r}{b}}^{\frac{u_r}{b}}e^{ju\tau}
       \Big[P_{\tilde{x}\tilde{x}}(u)|\phi_{\xi\zeta}(u,-u)|^2-\frac{T}{2\pi}\int_{-\frac{u_r}{b}}^{\frac{u_r}{b}}P_{\tilde{x}\tilde{x}}(u_1)\nonumber\\
    &\times|\phi_{\xi\zeta}(u_1,-u)|^2{\rm{d}}u_1\Big]{\rm{d}}u.\label{Error:E8}
\end{align}
Combining (\ref{Error:E7}) and (\ref{Error:E8}), we obtain
\begin{align}
R_{\bar{x}\bar{x}}(t,t-\tau)
=&\frac{T}{4\pi^2}\int_{-\frac{u_r}{b}}^{\frac{u_r}{b}}\Big(\int_{-\frac{u_r}{b}}^{\frac{u_r}{b}}
    P_{\tilde{x}\tilde{x}}(u_1)[1-|\phi_{\xi\zeta}(u_1,-u)|^2]{\rm{d}}u_1\Big)e^{ju\tau}{\rm{d}}u\nonumber\\
  &\quad+\frac{1}{2\pi}\int_{-\frac{u_r}{b}}^{\frac{u_r}{b}}e^{ju\tau}P_{\tilde{x}\tilde{x}}(u)|\phi_{\xi\zeta}(u,-u)|^2{\rm{d}}u.\label{Error:E9}
\end{align}

Similarly, we can obtain the cross-correlation function of $\bar{x}(t)$ and $\tilde{x}(t)$ as follows,
\begin{equation}\label{Error:E10}
R_{\bar{x}\tilde{x}}(t,t-\tau)=\frac{1}{2\pi}\int_{-\frac{u_r}{b}}^{\frac{u_r}{b}}e^{ju\tau}P_{\tilde{x}\tilde{x}}(u)\phi_{\xi\zeta}(u,-u){\rm{d}}u.
\end{equation}
Therefore, we have
\begin{eqnarray}
P_{\bar{x}\bar{x}}(u)&=&P_{\tilde{x}\tilde{x}}(u)|\phi_{\xi\zeta}(u,-u)|^2{}\nonumber\\
   {}&&+\frac{T}{2\pi}\Big(\int_{-\frac{u_r}{b}}^{\frac{u_r}{b}}
        P_{\tilde{x}\tilde{x}}(u_1)[1-|\phi_{\xi\zeta}(u_1,-u)|^2]{\rm{d}}u_1\Big) \label{Error:E11}
\end{eqnarray}
and
\begin{equation}\label{Error:E12}
P_{\bar{x}\tilde{x}}(u)=P_{\tilde{x}\tilde{x}}(u)\phi_{\xi\zeta}(u,-u).
\end{equation}
Hence, the first term $P_{\tilde{x}\tilde{x}}(u)|\phi_{\xi\zeta}(u,-u)|^2$ in (\ref{Error:E11}) is the power spectral density of $\tilde{y}(t)$ in Fig.~\ref{Fig.7}.
Substituting (\ref{Lem:L2}) into (\ref{Error:E1}), we have
\begin{eqnarray*}
P_{vv}(u)&=&T\int_{-u_r}^{u_r}P_{xx}^A(u_1)[1-\phi_{\xi\zeta}(\frac{u_1}{b},-u)|^2]{\rm{d}}u_1\\
&=&T\int_{-u_r}^{u_r}\frac{1}{2\pi b}P_{\tilde{x}\tilde{x}}(\frac{u_1}{b})[1-\phi_{\xi\zeta}(\frac{u_1}{b},-u)|^2]{\rm{d}}u_1\\
&=&\frac{T}{2\pi}\int_{-\frac{u_r}{b}}^{\frac{u_r}{b}}P_{\tilde{x}\tilde{x}}(u_1)[1-\phi_{\xi\zeta}(u_1,-u)|^2]{\rm{d}}u_1.
\end{eqnarray*}
Thus, the second term $\frac{T}{2\pi}\Big(\int_{-\frac{u_r}{b}}^{\frac{u_r}{b}}P_{\tilde{x}\tilde{x}}(u_1)[1-|\phi_{\xi\zeta}(u_1,-u)|^2]{\rm{d}}u_1\Big)$ in (\ref{Error:E11}) is identical to the power spectral density of $v(t)$. Consequently, the model described in Fig.~\ref{Fig.7} is equal to the procedure, which includes the nonuniform sampling mentioned in subsection~\ref{subsec:N1} and the approximate reconstruction approach by utilizing sinc interpolation function represented in Fig.~\ref{Fig.6}, in the sense of second order statistic characters.

Next, we estimate the error $\mathbb{E}[|\hat{x}(t)-x(t)|^2]$.
Let $\epsilon(t)=\hat{x}(t)-x(t)$. Combining (\ref{Lem:L2}) and (\ref{Error:E11}), we get
\begin{eqnarray}
P_{\hat{x}\hat{x}}^{A}(u)
&=&\frac{1}{2\pi b}P_{\bar{x}\bar{x}}(\frac{u}{b})\nonumber\\
&=&\frac{1}{2\pi b}P_{\tilde{x}\tilde{x}}(\frac{u}{b})|\phi_{\xi\zeta}(\frac{u}{b},-\frac{u}{b})|^2{}\nonumber\\
   {}&&+\frac{T}{4\pi^2b}\Big(\int_{-\frac{u_r}{b}}^{\frac{u_r}{b}}
        P_{\tilde{x}\tilde{x}}(u_1)[1-|\phi_{\xi\zeta}(u_1,-\frac{u}{b})|^2]{\rm{d}}u_1\Big)\nonumber\\
&=&P_{xx}^{A}(u)|\phi_{\xi\zeta}(\frac{u}{b},-\frac{u}{b})|^2{}\nonumber\\
   {}&&+\frac{T}{2\pi b}\Big(\int_{-u_r}^{u_r}
        P_{xx}^{A}(u_1)[1-|\phi_{\xi\zeta}(\frac{u_1}{b},-\frac{u}{b})|^2]{\rm{d}}u_1\Big).\label{Error:E13}
\end{eqnarray}
Similarly, we obtain
\begin{eqnarray}
P_{\hat{x}x}^{A}(u)
&=&\frac{1}{2\pi b}P_{\bar{x}\tilde{x}}(\frac{u}{b})\nonumber\\
&=&\frac{1}{2\pi b}P_{\tilde{x}\tilde{x}}(\frac{u}{b})\phi_{\xi\zeta}(\frac{u}{b},-\frac{u}{b}))\nonumber\\
&=&P_{xx}^{A}(u)\phi_{\xi\zeta}(\frac{u}{b},-\frac{u}{b}).\label{Error:E14}
\end{eqnarray}
Hence, the LCT auto-power spectral density of the reconstruction error $\epsilon(t)$ in Fig.~\ref{Fig.7} is
\begin{eqnarray}
P_{\epsilon\epsilon}^A(u)
&=&P_{\hat{x}\hat{x}}^A(u)-P_{\hat{x}x}^A(u)-P_{x\hat{x}}^A(u)+P_{xx}^A(u)\nonumber\\
&=&P_{xx}^{A}(u)|\phi_{\xi\zeta}(\frac{u}{b},-\frac{u}{b})|^2+\frac{T}{2\pi b}\Big(\int_{-u_r}^{u_r}P_{xx}^{A}(u_1){}\nonumber\\
   {}&&\quad\times[1-|\phi_{\xi\zeta}(\frac{u_1}{b},-\frac{u}{b})|^2]{\rm{d}}u_1\Big)-P_{xx}^{A}(u)\phi_{\xi\zeta}(\frac{u}{b},-\frac{u}{b}){}\nonumber\\
   {}&&\qquad-[P_{xx}^{A}(u)\phi_{\xi\zeta}(\frac{u}{b},-\frac{u}{b})]^*+P_{xx}^A(u)\nonumber\\
&=&P_{xx}^{A}(u)|1-\phi_{\xi\zeta}(\frac{u}{b},-\frac{u}{b})|^2+\frac{T}{2\pi b}\Big(\int_{-u_r}^{u_r}P_{xx}^{A}(u_1){}\nonumber\\
   {}&&\qquad\times[1-|\phi_{\xi\zeta}(\frac{u_1}{b},-\frac{u}{b})|^2]{\rm{d}}u_1\Big).\label{Error:E15}
\end{eqnarray}
Therefore, it follows from (\ref{LCT:A1}) and (\ref{spectrum:A1}) that
\begin{eqnarray*}
\mathbb{E}[|\epsilon(t)|^2]
&=&R_{\tilde{\epsilon}\tilde{\epsilon}}(0)\nonumber\\
&=&R_{\epsilon\epsilon}^A(0)\nonumber\\
&=&\int_{-u_r}^{u_r}P_{\epsilon\epsilon}^{A}(u){\rm{d}}u\nonumber\\
&=&\int_{-u_r}^{u_r}P_{xx}^{A}(u)|1-\phi_{\xi\zeta}(\frac{u}{b},-\frac{u}{b})|^2{\rm{d}}u{}\nonumber\\
   {}&&\qquad+\frac{T}{2\pi b}\int_{-u_r}^{u_r}P_{xx}^{A}(u)\int_{-u_r}^{u_r}1-|\phi_{\xi\zeta}(\frac{u}{b},-\frac{u_1}{b})|^2{\rm{d}}u_1{\rm{d}}u.
\end{eqnarray*}
This completes the proof.
\end{proof}

Note that the reconstruction error $\mathbb{E}[|\hat{x}(t)-x(t)|^2]$ is related to the LCT auto-correlation power spectral density $P_{xx}^{A}(u)$ of the random signal $x(t)$ and the joint characteristic function $\phi_{\xi\zeta}(u,-u)$ of two random variables $\xi_n$ and $\zeta_n$. In particular, when $\xi_n$ and $\zeta_n$ are constants and both are equal to zeros, i.e., the nonuniform sampling studied in this paper reduces to uniform sampling, we have $\mathbb{E}[|\hat{x}(t)-x(t)|^2]=0$ from Theorem \ref{Thm:Error}. Therefore the result of uniform sampling proposed in \cite[Theorem 3.4]{HS2015} is a special case of Theorems \ref{LCT:B1} and \ref{Thm:Error} in this paper.

On the other hand, the LCT includes many widely used linear transforms as special cases. For example, by setting $A=
\left(
\begin{array}{cc}
\cos \theta  & \sin \theta \\
-\sin \theta & \cos \theta \\
\end{array}
\right)$ with $\theta\in[-\pi,\pi)$ in (\ref{def:LCT:L0}), the LCT of $f(t)$ becomes the fractional Fourier transform of $f(t)$ with angle $\theta$.
In this case, our results coincide with those in \cite{XZT2016}, and in particular, when $\xi_n$ and $\zeta_n$ are constants and equal to zeros, our results coincide with the uniform sampling results in \cite{TZW2011}. Furthermore, by setting $A=
\left(
\begin{array}{cc}
0  & 1 \\
-1 & 0 \\
\end{array}
\right)$, the LCT of $f(t)$ becomes the Fourier transform of $f(t)$ multiplied by a constant $\sqrt{\frac{1}{j2\pi}}$. In this case, our results coincide with those in \cite{MO2011}.

\subsection{Simulations}
In this subsection, we consider a random signal $x(t)=e^{j5\pi t+j\psi-j\frac{3}{2}\pi t^2}, -4\leq t\leq4$, where $\psi$ follows the standard  Gaussian distribution.
It is easy to verify that $\tilde{x}(t)=x(t)e^{j\frac{a}{2b}t^2}$ is wide sense stationary, and $x(t)$ is approximately bandlimited with bandwidth $10$ Hz in the LCT domain with parameter $A=\left(
\begin{array}{cc}
a & b \\
c & d \\
\end{array}
\right)=\left(
\begin{array}{cc}
3 & 1/\pi \\
\pi & 2/3 \\
\end{array}
\right)$. Let $T=T_N=0.1$. First, the approximate signal recovery results based on (\ref{LCT:B2}) for one realization of $x(t)$ are respectively shown in Fig.~\ref{fig11} in terms of two different nonuniform sampling models. Let $\xi_n$ follow the uniform distribution in the interval $[-0.01-0.002*k,0.01+0.002*k]$ and integer $k$ take a value from $0$ to $15$. Then, for each $k$, $0\leq k\leq15$, we implement $5000$ realizations of $x(t)$ and estimate the mean square error of the reconstruction in terms of four different nonuniform sampling models as shown in Fig.~\ref{fig13}. One can see from Fig.~\ref{fig13} that the reconstruction from the nonuniform sampling with $\zeta_n=0$ is preferable. In fact, the reconstruction is an approximate solution, and we cannot claim which sampling model is the best in general. However, according to (\ref{mseerror}) in Theorem \ref{Thm:Error}, a lower mean square error of the reconstruction might be obtained by choosing a proper joint characteristic function of $\xi_n$ and $\zeta_n$.
 %Moreover, for the last nonuniform sampling model, i.e., $\zeta_n$ and $\xi_n$ are i.i.d., we compare the theoretical results in Theorem \ref{Thm:Error} and the simulation results in Fig.~\ref{fig13} in the following Table \ref{my-label}. One can see from Table \ref{my-label} that the simulation results match well the theoretical results, although small differences between them still exist due to the fact that the signal $x(t)$ is \textit{approximately} bandlimited with bandwidth $10$ Hz in the LCT domain and only $5000$ realizations of $x(t)$ are implemented in Fig.~\ref{fig13}.

%\begin{table*}[!h]\footnotesize
%\centering
%\begin{tabular}{c|c|c|c|c|c|c|c|c}
%\hline
%$k$ & 0 & 1 & 2 & 3 & 4 & 5 & 6 & 7 \\ \hline
%simulation results & 0.0195 & 0.0277 & 0.0373 & 0.0483 & 0.0606 & 0.0744 & 0.0892 & 0.1053 \\ \hline
%theoretical results & 0.0191 & 0.0274 & 0.0371 & 0.0482 & 0.0608 & 0.0746 & 0.0898 & 0.1062 \\
%\hline
%\end{tabular}
%\begin{tabular}{c|c|c|c|c|c|c|c|c}
%\hline
%$k$ & 8 & 9 & 10 & 11 & 12 & 13 & 14 & 15\\ \hline
%simulation results & 0.1228 & 0.1413 & 0.1609 & 0.1814 & 0.2030 & 0.2256 & 0.2488 & 0.2730\\ \hline
%theoretical results & 0.1238 & 0.1425 & 0.1624 & 0.1833 & 0.2051 & 0.2280 & 0.2517 & 0.2761\\
%\hline
%\end{tabular}
%\caption{Comparison between the theoretical results in Theorem \ref{Thm:Error} and the simulation results in Fig.~\ref{fig13} when $\zeta_n$ and $\xi_n$ are i.i.d.}
%\label{my-label}
%\end{table*}

\section{Conclusion}\label{sec C}
In this paper, we mainly discuss the nonuniform sampling for random signals, which are bandlimited in the LCT domain. At the beginning, based on the concepts of the LCT correlation function and power spectral density, we get the connection between the LCT auto-power spectral density of the inputs and outputs. Moreover, we show that nonuniform sampling for random signals bandlimited in the LCT domain can be identical to uniform sampling after a pre-filter in the sense of second order statistic characters. Furthermore, we derive an approximate reconstruction formula for random signals bandlimited in LCT domain from their nonuniform samples, by utilizing the sinc interpolation functions. Finally, we investigate the error between the original random signal and its approximation in the mean square sense, and verify the performances of our theoretical results by numerical simulation.

%\textbf{Acknowledgements.}\,\,
%The authors thank the referees very much for  elaborate and valuable suggestions which helped
%to improve this paper.

%\section*{References}
%\bibliographystyle{abbrv}
%\bibliography{data}

\newpage

%\iffalse

\begin{figure}[!ht]
\begin{center}

\begin{tikzpicture}[scale=1.2]
\node (xt) {$x(t)$};
\node(LCT)  [rectangle,draw,right=  of xt]  {LCT};
\node(times)  [right=of LCT]  {$\displaystyle\bigotimes$};
\node(Hu)  [above=of times]  {$H(u)$};
\node(iLCT)  [rectangle,draw,right=of times]  {Inverse LCT};
\node (yt) [right=of iLCT]{$y(t)$};

\draw [->] (xt) to (LCT);
\draw [->] (LCT) to node [label=above:$X(u)$]{} (times);
\draw [->] (times) to node [label=above:$Y(u)$]{} (iLCT);
\draw [->] (iLCT) to (yt);
\draw [->] (Hu) to (times);
\end{tikzpicture}
\end{center}
\caption{The LCT multiplicative filter.}
\label{Fig.1}
\end{figure}
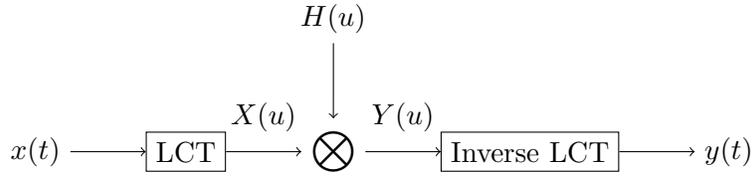

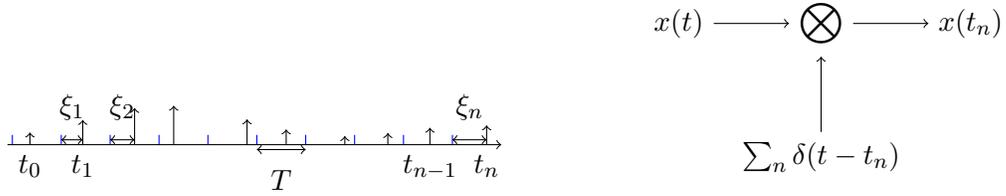
\begin{figure}[!ht]
\begin{minipage}[t]{0.5\linewidth}
\vspace{-1.3cm}
\begin{center}
\begin{tikzpicture}[domain=0:5,scale=1.3]

\foreach \x in {0,0.5,...,4.5}
  \draw[color=blue](\x,0)--(\x,0.1);

\foreach \x in {0.18,0.72,1.25,1.65,2.4,2.8,3.4,3.84,4.27,4.85}
 {\pgfmathparse{0.2+0.1*sin(\x r)-0.1*cos(2*\x r)}\edef\y{\pgfmathresult}
  \draw [->](\x,0) -- (\x,\y);
 }
\fill[black!100] (0.18,0) node[below]{$t_0$};
\fill[black!100] (0.72,0) node[below]{$t_1$};
%\fill[black!100] (1.25,0) node[below]{$t_2$};
%\fill[black!100] (3.84,0) node[below]{$t_{n-2}$};
\fill[black!100] (4.27,0) node[below]{$t_{n-1}$};
\fill[black!100] (4.85,0) node[below]{$t_{n}$};

\draw[->](-0.05,0) -- (5,0);

\foreach \x/\y in {2.5/-0.05}
  \draw [<->](\x,\y) -- node [label=below:$T$]{}+(0.5,0);
\foreach \x/\y in {0.5/0.05}
  \draw [<->](\x,\y) -- node [label=above:$\xi_1$]{}+(0.22,0);
\foreach \x/\y in {1.0/0.05}
  \draw [<->](\x,\y) -- node [label=above:$\xi_2$]{}+(0.25,0);
\foreach \x/\y in {4.5/0.05}
  \draw [<->](\x,\y) -- node [label=above:$\xi_n$]{}+(0.35,0);
\end{tikzpicture}

\end{center}
%\caption{The nonuniform sampling sequence.}
%\label{Fig.2}
\end{minipage}
\begin{minipage}[t]{0.5\linewidth}
\begin{center}
\begin{tikzpicture}[scale=1.2]
\node (xt) {$x(t)$};
\node(times)  [right=of xt]  {$\displaystyle\bigotimes$};
\node(delta)  [below=of times]  {$\sum_{n}\delta(t-t_n)$};
\node (xn) [right=of times]{$x(t_n)$};

\draw[->] (xt) to (times);
\draw[->] (delta) to (times);
\draw[->] (times) to (xn);

\end{tikzpicture}

\end{center}

%\caption{The nonuniform sampling representation.}

\end{minipage}
\caption{The nonuniform sampling representation.}
\label{Fig.3}
\end{figure}

\begin{figure}[!ht]
\begin{center}

\begin{tikzpicture}[scale=1.2]
\node (xt) {$x(t)$};
\node(times)  [right= of xt]  {$\displaystyle\bigotimes$};
\node(et)  [above=of times]  {$e^{j\frac{a}{2b}t^2}$};
\node(ht)  [rectangle,draw,right=of times]  {$h_1(t)$};
\node(times2)  [right=of ht]  {$\displaystyle\bigotimes$};
\node(e2)  [above=of times2]  {$e^{-j\frac{a}{2b}t^2}$};
\node(AD)  [rectangle,draw,right=of times2]  {Sampling};
%\node (t) [below=of AD]{$T$};
\node(add)  [right=of AD]  {$\displaystyle\bigoplus$};
\node (vn) [below=of add]{$v(nT)$};
\node (times3) [right=of add]{$\displaystyle\bigotimes$};
\node(e3)  [above=of times3]  {$e^{j\frac{a}{2b}(nT)^2}$};
\node (zn) [right=of times3]{$\tilde{z}(nT)$};

\draw [->] (xt) to (times);
\draw [->] (et) to (times);
\draw [->](times) to node [label=above:$\tilde{x}(t)$]{} (ht);
\draw [->](ht) to node [label=above:$\tilde{y}(t)$]{} (times2);
\draw [->](e2) to (times2);
\draw [->](times2) to node [label=above:$y(t)$]{} (AD);
%\draw (t) to (AD);
\draw [->](AD) to node [label=above:$y(nT)$]{} (add);
\draw [->](add) to node [label=above:$z(nT)$]{} (times3);
\draw [->](vn) to (add);
\draw [->](e3) to (times3);
\draw [->](times3) to (zn);

\end{tikzpicture}
\end{center}
\caption{The equivalent system of the nonuniform sampling, where $v(t)$ is an additive noise with zero mean, $v(t)$ is independent of $x(t)$, and the LCT auto-power spectral density of $v(t)$ is $P_{xx}^{A}(u)(1-|H_1(u)|^2)$.}
\label{Fig.4}
\end{figure}
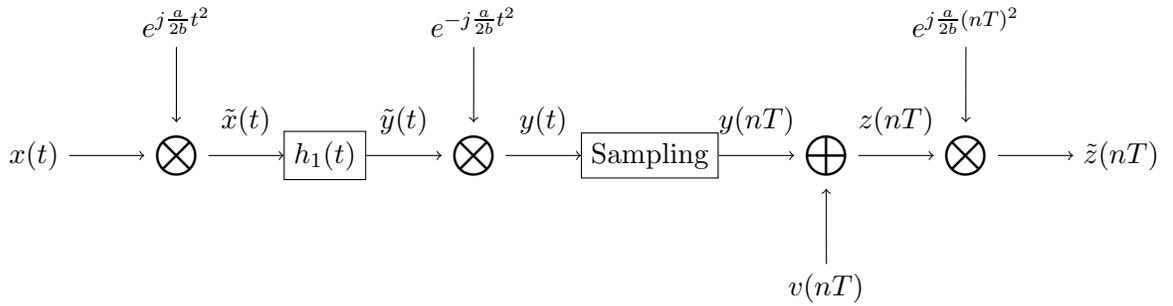

\begin{figure}[!ht]
\begin{center}

\begin{tikzpicture}[scale=2]
\node (xt) {$x(t)$};
\node(sampling)  [rectangle,draw,right=of xt]  {Sampling};
\node(tn)  [below=of sampling]  {$t_n=nT+\xi_n$};
\node(times)  [right=of sampling]  {$\displaystyle\bigotimes$};
\node(et)  [above=of times]  {$e^{j\frac{a}{2b}t_n^2}$};
\node (yt) [right=of times]{$\tilde{x}(t_n)$};

\draw [->] (xt) to (sampling);
\draw [->](tn) to (sampling);
\draw [->](sampling) to node [label=above:$x(t_n)$]{} (times);
\draw [->](et) to (times);
\draw [->](times) to (yt);

\end{tikzpicture}

\end{center}
\caption{Another version of nonuniform sampling.}
\label{Fig.5}
\end{figure}
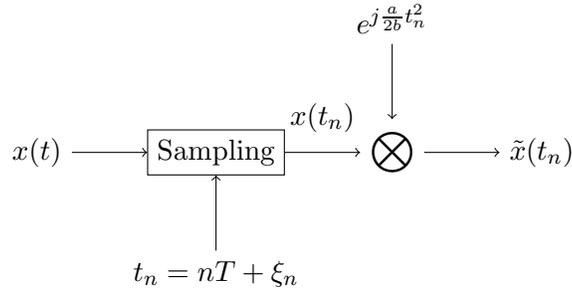

\begin{figure}[!ht]
\begin{center}

\begin{tikzpicture}[scale=2]
\node (xn) {$x(t_n)$};
\node(times)  [right=of xn]  {$\displaystyle\bigotimes$};
\node(et)  [above=of times]  {$e^{j\frac{a}{2b}t_n^2}$};
\node(impulse)  [rectangle,draw,right=of times]  {Synthesis};
%\node(tn)  [below=of impulse]  {$\tilde{t}_n=nT+\zeta_n$};
%\node(Hu) [rectangle,draw,right=of impulse]  {$h(t)$};
\node (times2) [right=of impulse]{$\displaystyle\bigotimes$};
\node(e2)  [above=of times2]  {$e^{-j\frac{a}{2b}t^2}$};
\node (yt) [right=of times2]{$\hat{x}(t)$};
\draw [->](xn) to (times);
\draw [->](et) to (times);
\draw [->](times) to node [label=above:$\tilde{x}(t_n)$]{} (impulse);
%\draw (tn) to (impulse);
%\draw (impulse) to (Hu);
\draw [->](impulse) to node [label=above:$\bar{x}(t)$] {} (times2);
\draw [->](e2) to (times2);
\draw [->](times2) to (yt);
\end{tikzpicture}

\end{center}
\caption{The approximate reconstruction with sinc interpolation function, where $\bar x(t)=\sum\limits_n \frac{T}{T_N}\tilde{x}(t_n){\mathrm{sinc}}\frac{\pi(t-\tilde{t}_n)}{T_N}$.}
\label{Fig.6}
\end{figure}
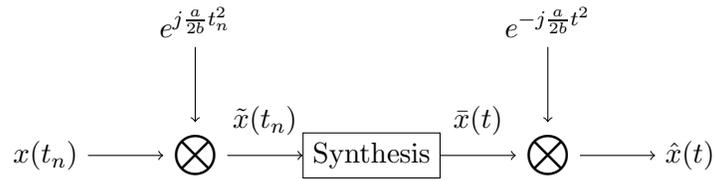

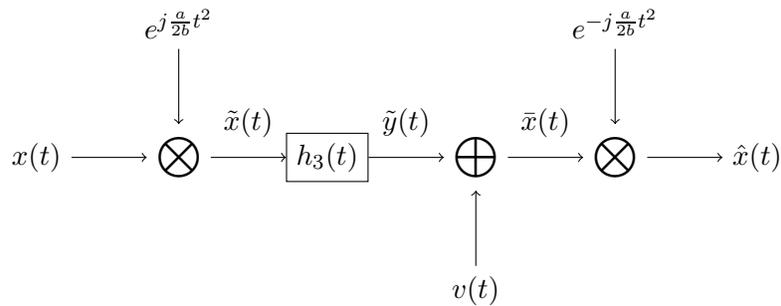
\begin{figure}[!ht]
\begin{center}

\begin{tikzpicture}[scale=2]
\node (xt) {$x(t)$};
\node(times)  [right=of xt]  {$\displaystyle\bigotimes$};
\node(et)  [above=of times]  {$e^{j\frac{a}{2b}t^2}$};
\node(ht)  [rectangle,draw,right=of times]  {$h_3(t)$};
%\node(filter)  [below=of ht]  {filter};
\node(add)  [right=of ht]  {$\displaystyle\bigoplus$};
\node (vt) [below=of add]{$v(t)$};
\node (times2) [right=of add]{$\displaystyle\bigotimes$};
\node(e2)  [above=of times2]  {$e^{-j\frac{a}{2b}t^2}$};
\node (yt) [right=of times2]{$\hat{x}(t)$};

\draw [->](xt) to (times);
\draw [->](et) to (times);
\draw [->](times) to node [label=above:$\tilde{x}(t)$]{} (ht);
%\draw (filter) to (ht);
\draw [->](ht) to node [label=above:$\tilde{y}(t)$]{} (add);
\draw [->](vt) to (add);
\draw [->](add) to node [label=above:$\bar{x}(t)$]{} (times2);
\draw [->](e2) to (times2);
\draw [->](times2) to (yt);

\end{tikzpicture}

\end{center}
\caption{The nonuniform sampling and reconstruction system,
where $v(t)$ is an additive noise with zero mean, which is uncorrelated with $x(t)$ and has  the power spectral density
defined by (\ref{Error:E1}).}
\label{Fig.7}
\end{figure}

%\fi
\begin{figure}
\centering
\subfigure[]{
\begin{minipage}[b]{1\textwidth}
\includegraphics[width=1\textwidth]{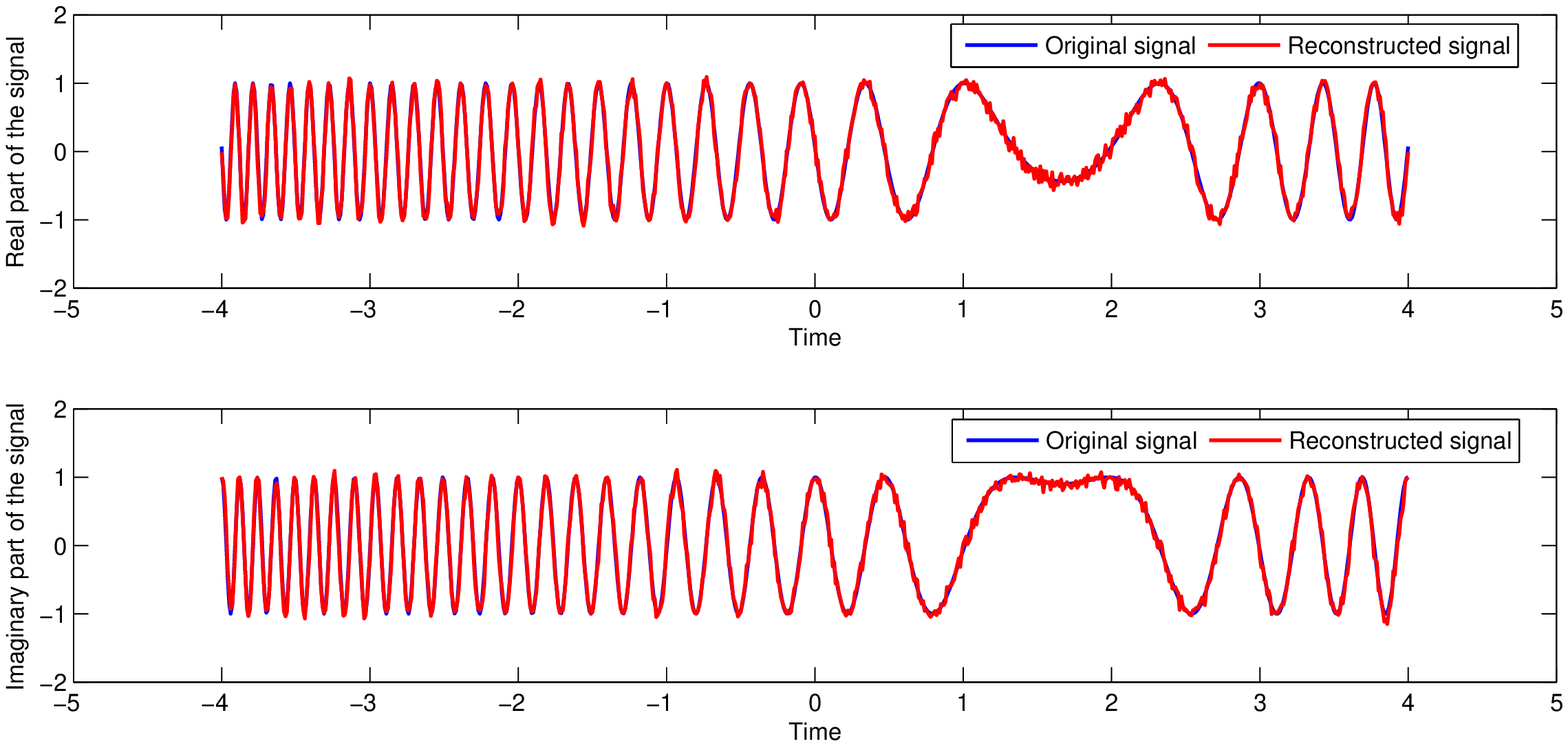}
\end{minipage}
}
\subfigure[]{
\begin{minipage}[b]{1\textwidth}
\includegraphics[width=1\textwidth]{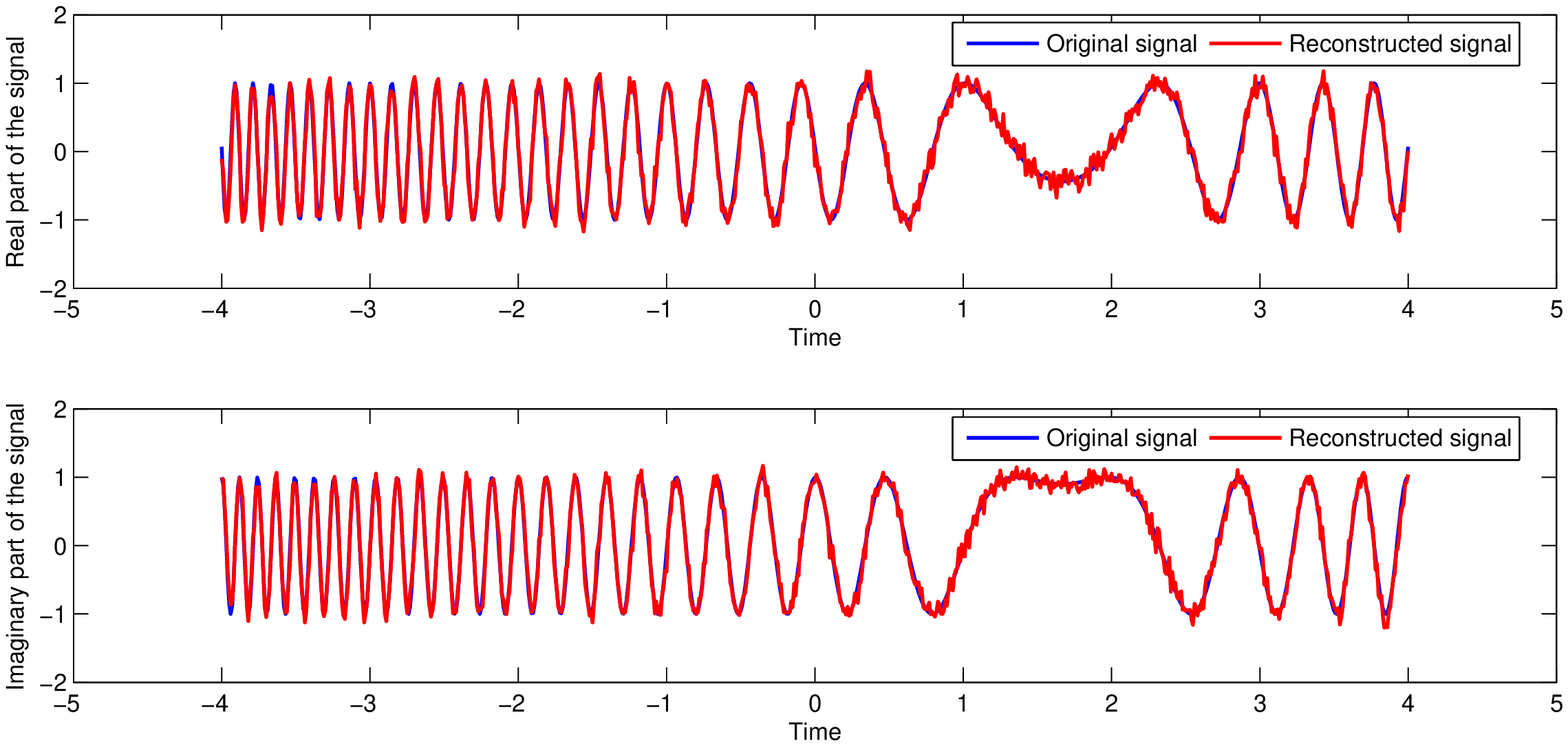}
\end{minipage}
}
 \caption{The approximate signal reconstruction: (a) when $\xi_n$ is uniformly distributed in the interval $[-0.01,0.01]$ and $\zeta_n=0$; (b) when $\xi_n$ and $\zeta_n$ are i.i.d. with uniform distribution in the interval $[-0.01,0.01]$.}
 \label{fig11}
\end{figure}

\begin{figure}[!ht]
\includegraphics[width=1\columnwidth]{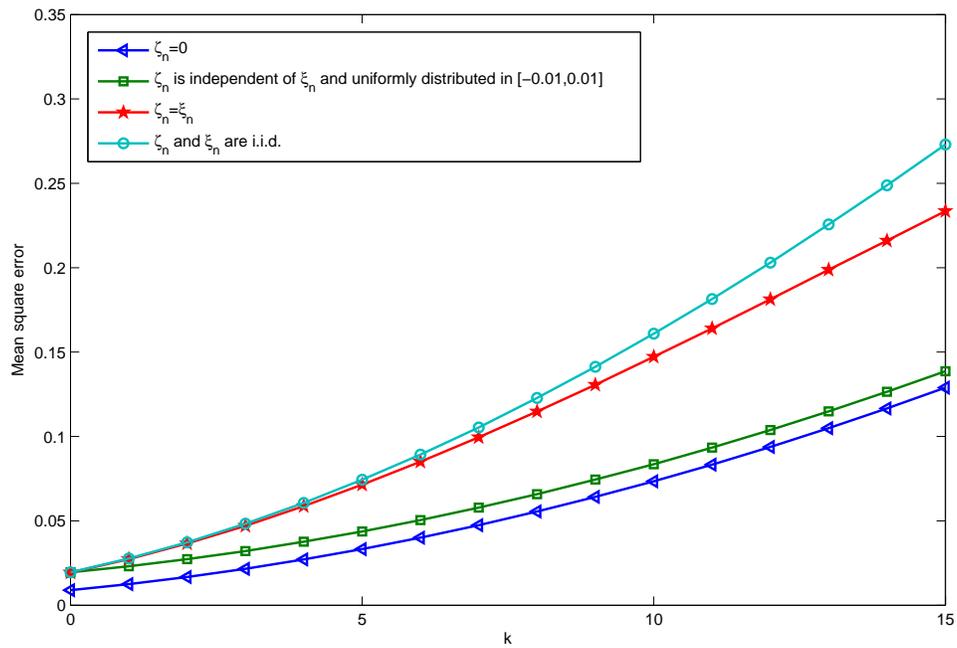}
%\vspace{-5cm}
\caption{Mean square error of the reconstruction when $\xi_n$ is uniformly distributed in the interval $[-0.01-0.002*k,0.01+0.002*k]$, where $k=0,1,\cdots,15$.}
 \label{fig13}
\end{figure}

\end{document}